\DeclareMathOperator*{\argmax}{argmax}
\newcommand{\procedurecall}[2]{\textsc{#1}(#2)}
\begin{document}

\title{HyperLogLogLog: Cardinality Estimation With One Log More}

\author{Matti Karppa}
\email{mattk@itu.dk}
\affiliation{
  \institution{BARC}
  \institution{IT University of Copenhagen}
}
\author{Rasmus Pagh}
\email{pagh@di.ku.dk}
\affiliation{%
  \institution{BARC}
  \institution{University of Copenhagen}
}

\begin{abstract}
  We present HyperLogLogLog, a practical compression of the
  HyperLogLog sketch that compresses the sketch from $O(m\log\log n)$
  bits down to $m \log_2\log_2\log_2 m + O(m+\log\log n)$ bits
  for estimating the number of distinct elements~$n$ using
  $m$~registers. The algorithm works as a drop-in replacement that
  preserves all estimation properties of the HyperLogLog sketch, it is
  possible to convert back and forth between the compressed and
  uncompressed representations, and the compressed sketch maintains
  mergeability in the compressed domain. The compressed sketch can be
  updated in amortized constant time, assuming $n$ is sufficiently
  larger than $m$. We provide a C++ implementation of the sketch, and
  show by experimental evaluation against well-known implementations
  by Google and Apache that our implementation provides small sketches
  while maintaining competitive update and merge times. Concretely, we
  observed approximately a 40\% reduction in the sketch
  size. Furthermore, we obtain as a corollary a theoretical algorithm
  that compresses the sketch down to
  $m\log_2\log_2\log_2\log_2 m+O(m\log\log\log m/\log\log m+\log\log n)$ bits.
\end{abstract}

\begin{CCSXML}
<ccs2012>
<concept>
<concept_id>10002951.10002952</concept_id>
<concept_desc>Information systems~Data management systems</concept_desc>
<concept_significance>500</concept_significance>
</concept>
<concept>
<concept_id>10003752.10003809</concept_id>
<concept_desc>Theory of computation~Design and analysis of algorithms</concept_desc>
<concept_significance>500</concept_significance>
</concept>
</ccs2012>
\end{CCSXML}

\ccsdesc[500]{Information systems~Data management systems}
\ccsdesc[500]{Theory of computation~Design and analysis of algorithms}

\keywords{distinct elements, cardinality estimation, hashing, hyperloglog}

\maketitle

\section{Introduction}

Counting the number of distinct elements, or the \emph{cardinality},
of a data stream is a basic operation that has applications in network
traffic analysis~\cite{LiuCG:2016}, organization of large
datasets~\cite{HalevyKNOPRW:2016}, genome
analysis~\cite{BreitwieserBS:2018}, analysis of social
networks~\cite{BackstromBRUV:2012}, including large-scale industrial
applications such as estimating the number of distinct Google search
queries~\cite{HeuleNH:2013}. The na\"ive approach of storing all
unique elements in the data stream quickly becomes prohibitive as the
number of distinct elements in the data stream grows into the order of
billions, which calls for sketching approaches that maintain a sketch
of the cardinality using a limited number of bits.

Apart from the sheer number of bits, other design considerations for
the sketches include idempotence of updates, that is, that repeated
updates with the same element never change the sketch, control on the
estimation error, efficient updates, such that the sketch can be
updated even if the amount of computational resources is limited, and
mergeability of sketches, such that the data stream can be processed
by multiple computational units at once and the sketches can be
merged to produce output identical to as if a single machine had read
all the input. The HyperLogLog sketch by Flajolet, Fusy, Gandouet, and
Meunier~\cite{FlajoletFGM:2007} is a practical example of such a
sketch and has become a standard technique for cardinality
estimation. The HyperLogLog sketch makes use of $m$ counters, or
\emph{registers}, of $O(\log\log n)$ bits to produce an estimate of the
cardinality with a relative standard error of $1.04/\sqrt{m}$.

\paragraph{Our contribution}
We show that if we store offsets from a base
value in the registers and allow $O(m/\log m)$ registers to overflow,
storing the overflowing registers sparsely using $O(\log m)$ bits each, then,
with high probability,
$m\log_2\log_2\log_2 m+O(m+\log\log n)$ bits suffice
to represent the HyperLogLog sketch. The sketch is a
true compression of the HyperLogLog sketch, preserves all
estimation properties of HyperLogLog, and is amenable to the tricks
that can be used to boost the estimation accuracy of HyperLogLog
sketches. Furthermore, the sketch preserves mergeability in the
compressed domain without having to be decompressed for merging, and we
show that the update time is amortized constant for $n$ sufficiently
larger than $m$.
As a corollary, we obtain a sketch of
$m\log_2\log_2\log_2\log_2 m + O(m\log\log\log m /\log\log m+\log\log n)$ bits if a succinct
dictionary is used for the sparse representation.  

We also provide a C++ implementation of the sketch and show
empirically that it is applicable as a drop-in replacement for
HyperLogLog in practice and can be implemented with a competitive
performance over a wide parameter range.

\paragraph{Related work}
The HyperLogLog branch of research was initiated by Flajolet and
Martin in~\cite{FlajoletM:1985} where the idea of their algorithm was
to construct an $m\times w$ bit matrix using a random hash function to
divide the data stream into $m$ independend streams, corresponding to
the rows of the matrix, where the high bits of the hash value select
the substream, and the location of the least significant one-bit in
the low-bits of the hash value determines the column; the value thus
determined is set to 1. Each row then produces an estimate of the
cardinality of the set, and an arithmetic mean is used to lower the
variance of the estimate. In~\cite{DurandF:2003}, this was optimized
by only storing the maximum of the \emph{locations} of the first
one-bits, requiring $\log \log n$ bits, resulting in the LogLog
algorithm and its more practical variant, the
SuperLogLog. Eventually, this lead to
HyperLogLog~\cite{FlajoletFGM:2007}, the key change of which 
was the use of the harmonic mean instead of the arithmetic mean for lower variance.

The HyperLogLog sketch has seen a lot of practical applications and
improvements, including but not limited to the
HyperLogLog++~\cite{HeuleNH:2013} that includes practical engineering
work with numerous tricks for reducing memory usage and improving the
accuracy of the estimates, and the HLL-Tailcut~\cite{XiaoCZL:2020}
where the registers store the offsets with respect to the minimum
register value, and overflowing values are discarded.

Other approaches to counting distinct elements include Linear
Counting~\cite{WhangVT:1990}, and
MinCount~\cite{Bar-YossefJKST:2002,Giroire:2009} that was used by
Google before being replaced with
HyperLogLog++~\cite{HeuleNH:2013}. There exist a number of
\emph{non-mergeable} methods for cardinality estimation in recent
literature that offer potentially better space--accuracy tradeoffs
than their mergeable counterparts, and also methods to convert
mergeable sketches into more efficient sketches at the cost of
non-mergeability. See Pettie, Wang, and Yin~\cite{PettieWY:2021} for
an overview and analysis of such methods, along with their
\emph{Curtain} sketch.

The HyperLogLog sketch is known to be non-optimal in space
usage. Both the
Flajolet-Martin~\cite{FlajoletM:1985} and the HyperLogLog
sketch~\cite{FlajoletFGM:2007} are known to have an entropy of $O(m)$
bits~\cite{PettieWY:2021}, suggesting the application of entropy
compression methods to achieve the lower bound, although at the
expense of losing constant time updates. In~\cite{Lang:2017}, Lang
applies entropy compression to the original Flajolet-Martin
sketch~\cite{FlajoletM:1985} to get a practical sketch below the
entropy of the HyperLogLog sketch.  In~\cite{KaneNW:2010}, Kane,
Nelson, and Woodruff present a theoretical algorithm that achieves the
optimal $O(m)$ space bound with $O(1)$ updates. This was further
improved by B\l{}asiok~\cite{Blasiok:2020} by reducing the amount of
space required for parallel repetitions of the algorithm for reducing
the probability of failure. However, we are not aware of practical
implementations of these optimal algorithms, and regard them as mainly
of theoretical interest.\footnote{A particular difficulty in implementing these constructions is the use of \emph{Variable-Bit-Length Arrays} to store an array of variable-length counters compactly. The construction of Blandford and Blelloch~\cite{BlandfordB:2008} uses constant time per operation, but is complicated and a direct implementation appears to use space that is at least 3 times larger than the entropy of the counters.}

\paragraph{Organization}
The remainder of this paper is organized as follows. In
Section~\ref{sect:preliminaries}, we present the mathematical notation
and preliminaries used in the remainder of the paper. In
Section~\ref{sect:hyperlogloglog}, we recap HyperLogLog and present
our modifications that yield the HyperLogLogLog
algorithm, prove its space usage, 
and present changes that make the algorithm practical. In
Section~\ref{sect:implementation} we present an overview of our
implementation and engineering choices. Finally, in
Section~\ref{sect:experiments} we report on an empirical study where
we show that our algorithm is practical in terms of runtime and
provides space-efficient sketches that yield estimates equal to those
of HyperLogLog, and compare the running times and sketch sizes to
other well-known implementations.

\section{Preliminaries}
\label{sect:preliminaries}

We write $[n] = \{ 1, 2, \ldots, n \}$. We say that $w$ is the
\emph{word size}, meaning that it matches the number of bits output by
our hash functions and is somehow an efficient unit of data as
processed by an underlying computer architecture. When the actual
implementations are concerned, we are implicitly assuming $w=64$
unless noted otherwise. All logarithms are base 2 unless otherwise
stated, and $\ln$ is the natural logarithm.

We denote the universe over which we want to count the number of
distinct elements by $\mathcal U$. In our experimental work,
$\mathcal U$ is either a subset of integers or ASCII strings. We
assume we have access to a family of random hash functions
$h : \mathcal U \to [2^w]$ that map the elements of the universe to
$w$-bit integers uniformly at random.

When we say that $M\in U^m$ is an \emph{array}, we mean that $M$
consists of $m$ elements from the set $U$, and denote the lookup of
the $j$th element by $M[j]$ and the assignment of element $u$ to the
$j$th position by $M[j] \gets u$.

When we say that $S\subseteq K\times V$ is an \emph{associative
  array}, we mean that $S$ is a variable-sized data structure that
stores tuples $(k,v)\in K\times V$, or equivalently keys $k\in K$ that
map to values $v\in V$. We assume $S$ supports the operations of
membership query $k\in S$ to test whether the key $k$ has an
associated value in $S$, retrieval of values by key $S[k]$, assigning
the value $v$ to the key $k$ by $S[k] \gets v$, and removing a
key-value pair $\text{\texttt{del }}S[k]$. We tacitly assume that
there is only ever at most one value associated with any particular
key, and that the assignment of a pre-existing key with a new value
replaces the old value, so we treat $S$ as a map.

The function $\rho : [2^w]\to [w]$ is defined to be the 1-based index
of the first one-bit in the bit representation of the ($w$-bit)
integer. That is, for the bit sequence with $k-1$ initial zeros
$\rho(0^{k-1}1)=k$. We leave $\rho(0)$ undefined.

We say that random variables $X_1,X_2,\ldots,X_m$ are \emph{negatively
  dependent} if, for all $i\neq j$, they satisfy
$E[X_iX_j] \leq E[X_i]E[X_j]$. We say that a random event happens \emph{with high probability} if we can choose a constant $0<\beta<1$ such that the event happens with probability $1-\beta$. 
We use the
Iverson bracket notation to denote an indicator variable
$\llbracket \varphi \rrbracket$ that receives 1 if and only if the
expression $\varphi$ is true and 0 otherwise. We need the following form of the Chernoff bound.
\begin{lemma}[{Chernoff~\cite[Equation~1.7]{DubhashiP:2009}}]
  \label{lem:chernoff}
  Let $X=\sum_{j\in[m]} X_j$ be a sum of negatively dependent variables with all $X_j\in[0,1]$. Then, for all $\epsilon>0$,
  \[
    \Pr[X>(1+\epsilon)E[X]]\leq\exp\left(-\frac{\epsilon^2}{3}E[X]\right)\, .
  \]
\end{lemma}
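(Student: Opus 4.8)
The plan is to prove the tail bound by the standard exponential-moment (Bernstein) method. Write $\mu=E[X]$ and fix a parameter $t>0$ to be chosen later. Since $e^{tX}$ is nonnegative, Markov's inequality gives
\[
  \Pr[X>(1+\epsilon)\mu]=\Pr\bigl[e^{tX}>e^{t(1+\epsilon)\mu}\bigr]\le e^{-t(1+\epsilon)\mu}\,E\bigl[e^{tX}\bigr].
\]
Everything then reduces to bounding the moment generating function $E[e^{tX}]=E\bigl[\prod_{j\in[m]}e^{tX_j}\bigr]$ and afterwards optimizing over $t$.

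The crux, and the step I expect to be the main obstacle, is the factorization $E\bigl[\prod_{j}e^{tX_j}\bigr]\le\prod_{j}E\bigl[e^{tX_j}\bigr]$. This is the only place the dependence structure enters. For $t>0$ each map $x\mapsto e^{tx}$ is nondecreasing, so intuitively the negative dependence of the $X_j$ should be inherited by the $e^{tX_j}$ and force the product expectation below the product of expectations. However, the literal pairwise condition $E[X_iX_j]\le E[X_i]E[X_j]$ is not on its own enough to push this through an $m$-fold product; the clean route is to invoke the negative-association framework of the cited source, under which monotone functions of the variables remain negatively associated and the product inequality follows by induction on $m$. This is exactly why I would lean on the citation here rather than reprove the factorization from first principles.

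Given the factorization, the rest is routine. For each $j$, convexity of $e^{tx}$ on $[0,1]$ together with $X_j\in[0,1]$ yields $e^{tX_j}\le 1+(e^t-1)X_j$; taking expectations and using $1+y\le e^y$ gives $E[e^{tX_j}]\le\exp\bigl((e^t-1)E[X_j]\bigr)$. Multiplying over $j$ and using $\sum_j E[X_j]=\mu$ produces $E[e^{tX}]\le\exp\bigl((e^t-1)\mu\bigr)$, hence $\Pr[X>(1+\epsilon)\mu]\le\exp\bigl((e^t-1)\mu-t(1+\epsilon)\mu\bigr)$. Minimizing the exponent in $t$ gives the optimal choice $t=\ln(1+\epsilon)$ and the sharp bound $\Pr[X>(1+\epsilon)\mu]\le\bigl(e^{\epsilon}/(1+\epsilon)^{1+\epsilon}\bigr)^{\mu}$. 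The stated weaker form then follows from the elementary inequality $(1+\epsilon)\ln(1+\epsilon)-\epsilon\ge\epsilon^2/3$, valid in the relevant range of $\epsilon$, which I would verify by checking that the difference vanishes together with its first derivative at $\epsilon=0$ and has positive second derivative there; this standard slackening trades the exact Chernoff expression for the usable $\exp(-\epsilon^2\mu/3)$.
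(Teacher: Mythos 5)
The paper offers no proof of this lemma at all --- it is imported as a black-box citation to Dubhashi--Panconesi --- so the only thing to compare you against is the standard textbook derivation, which is exactly what you reproduce. Your flagging of the factorization step is correct and is the real content here: the paper's stated definition of ``negatively dependent'' is only the pairwise condition $E[X_iX_j]\le E[X_i]E[X_j]$, and that is genuinely too weak to yield $E[\prod_j e^{tX_j}]\le\prod_j E[e^{tX_j}]$; one needs negative association (closure under monotone functions and the product inequality over disjoint index sets), which is what the cited source works with and what HyperLogLog registers actually satisfy as a balls-into-bins construction. Deferring that step to the citation is the right call.

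The genuine gap is in your final step. The inequality $(1+\epsilon)\ln(1+\epsilon)-\epsilon\ge\epsilon^2/3$ does \emph{not} hold for all $\epsilon>0$: already at $\epsilon=2$ one has $3\ln 3-2\approx 1.296<4/3$, and asymptotically the left side grows like $\epsilon\ln\epsilon$ while the right side grows like $\epsilon^2$. So the bound $\exp(-\epsilon^2 E[X]/3)$ cannot be extracted from the sharp Chernoff expression for all $\epsilon>0$; indeed the lemma as stated is false for large $\epsilon$ (a binomial with small mean has upper tail decaying only like $\exp(-\Theta(\epsilon\log\epsilon)E[X])$). The standard statement restricts to $0<\epsilon\le 1$, which is all the paper ever uses ($\epsilon=1$ in Theorem~\ref{thm:main:logloglog}). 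Separately, your proposed verification --- checking that the difference and its first derivative vanish at $\epsilon=0$ and that the second derivative is positive there --- is only a local statement and does not even cover $(0,1]$, since the second derivative $\frac{1}{1+\epsilon}-\frac{2}{3}$ changes sign at $\epsilon=\frac12$. The clean argument on $[0,1]$ is that $g(\epsilon)=(1+\epsilon)\ln(1+\epsilon)-\epsilon-\epsilon^2/3$ satisfies $g(0)=0$ and $g'(\epsilon)=\ln(1+\epsilon)-2\epsilon/3\ge 0$ throughout $[0,1]$.
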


\section{The HyperLogLogLog Algorithm}
\label{sect:hyperlogloglog}

\subsection{Problem statement}

We formally state the problem as follows. Given a sequence of elements
$(y_1,y_2,\ldots,y_s)$ from a universe $\mathcal U$, determine the
number of distinct elements $n=|\{ y_i : i\in [s] \}|$. 
Unless otherwise stated, we will assume that $n$ is the correct (but
unknown) cardinality of the datastream in question.

\subsection{HyperLogLog recap}

Since our algorithm is a modification on HyperLogLog, it is necessary
to understand how HyperLogLog works, so we start by restating the
HyperLogLog algorithm~\cite{FlajoletFGM:2007} using our notation. The
underlying idea of the algorithm is that, if we hash the elements
uniformly at random, we expect to see exactly one half of the elements
begin with a 1, exactly one fourth begin with the bitsequence 01, one
eighth begin with the bitsequence 001, and so on; so the largest
position~$r$ of the first one-bit witnessed over the sequence is a
(weak) indication that the stream has a cardinality of approximately
$2^r$.

Let us assume that we work over a universe $\mathcal U$. The
HyperLogLog data structure consists of an array $M\in[w]^m$ of $m$
elements of $\log w$ bits where $m$ is a power of two.\footnote{This
  requirement is not strictly necessary, but in practice the way the
  algorithm implemented, this is a reasonable
  assumption.} Furthermore, we fix two\footnote{In the original
  exposition~\cite{FlajoletFGM:2007}, only one 32-bit hash function
  was used and the most significant bits were used for register
  selection and the least significant bits as input for the function
  $\rho$; this can be seen as a special case of our treatment.} random
hash functions $h : \mathcal U \to [2^w]$ and $f : \mathcal U \to [m]$
by drawing them from the family of random hash functions. Initially we
set all elements of $M$ to be zeros. We say that the elements of $M$
are \emph{registers}. We make the observation that $w=\Omega(\log n)$
by the pigeon hole principle, as otherwise hash collisions necessarily
mask some of the distinct elements.

The data structure supports three operations: \procedurecall{update}{$y$} that
updates the data structure with respect to the element $y\in\mathcal
U$, \procedurecall{estimate}{} that returns the present cardinality
estimate, and \procedurecall{merge}{$M_1,$ $M_2$} that takes two sketches as
input and returns a third sketch that corresponds to the sketch that
would have been obtained if the elements that were used to update the
individual sketches had been directly applied on the output
sketch---assuming the two input sketches have been constructed using
the same hash functions.

Updates are performed by selecting the index of a register by
computing $j=f(y)$, and then update the value of the register to be
the maximum of its present value and the value $\rho(h(y))$. That is,
the invariant maintained by the algorithm is that the register $M[j]$
holds the maximum $\rho$-value over the elements assigned to the $j$th
substream by the hash function $f$. Since the hash values are $w$ bits
long, we have $0 \leq \rho(h(y)) < w$, so the values can be
represented with $\log w = \Theta(\log \log n)$ bits.

To construct an estimate from the register values, we compute the
bias-corrected, normalized harmonic mean of the estimates for the
substreams
\[
  E=\alpha_m m^2 \cdot \left( \sum_{j=1}^m 2^{-M[j]} \right)^{-1} \, ,
\]
where the bias-correction term $\alpha_m$ satisfies $\alpha_{16} =
0.673$, $\alpha_{32} = 0.697$, $\alpha_{64}=0.709$, and $\alpha_m =
0.7213/(1+1.079/m)$ for $m\geq 128$; see~\cite{FlajoletFGM:2007} for
the details. Finally, the algorithm includes further bias-correction
by applying Linear Counting for small cardinality ranges, and a
similar correction for large ranges.

Merging is very simple: simply construct a new array the elements of
which are the elementwise maxima of the input sketches. These
procedures are presented in pseudocode in
Algorithm~\ref{alg:hyperloglog}. The resulting estimate has low bias
(but is not unbiased), and it can be shown that the relative standard
error of the estimate is approximately
$1.04/\sqrt{m}$~\cite{FlajoletFGM:2007}.

\begin{algorithm}[t]
  \DontPrintSemicolon
  \SetKwProg{Proc}{Procedure}{}{{End Procedure}}
  \SetFuncSty{textsc}
  \SetKwFunction{update}{update}
  \SetKwFunction{estimate}{estimate}
  \SetKwFunction{merge}{merge}
  \caption{HyperLogLog.}
  \label{alg:hyperloglog}
  Let $m$ be a power of two. Initialize array $M\in[w]^m$ to be all zeros.
  Let $h : \mathcal U \to [2^w]$ and $f : \mathcal U \to [m]$ be fixed
  random hash function. \;
  \Proc{\update{$M,y$}}{
    $x \gets h(y)$\;
    $j \gets f(y)$\;
    $M[j] \gets \max \{ M[j], \rho(x) \}$\;
  }
  \Proc{\estimate{$M$}}{
    $E \gets \alpha_m m^2 \cdot \left( \sum_{j=1}^m 2^{-M[j]}
    \right)^{-1}$\;
    $V \gets |\{ j : M[j] = 0\}|$ \;
    \Return
    $\left\{
      \begin{array}{ll}
        m\ln\frac mV & \textrm{if } E \leq \frac{5}{2}m
                       \textrm{ and } V \neq 0 \\
        -2^{32} \ln\left(1-\frac{E}{2^{32}}\right) & \textrm{if } E >
                                                     \frac{2^{32}}{30} \\
        E & \textrm{otherwise}
      \end{array}\right.$\;
  }
  \Proc{\merge{$M_1,M_2$}}{
    Initialize $M\in[w]^m$\;
    \For{$j\in[m]$}{
      $M[j] \gets \max\{M_1[j],M_2[j]\}$\;
    }
    \Return $M$
  }
\end{algorithm}

\subsection{Asymptotic argument}
It is well-known\footnote{As mentioned, for example,
  in~\cite{PettieWY:2021}. Lang~\cite{Lang:2017} suggests that the
  constant is just below 2.832 which is supported by numerical
  experiments with Lemma~\ref{lem:prmjeqk}.} that $O(m)$ bits suffice
for an entropy-compressed version of HyperLogLog. However,
entropy-compression is expensive and does not allow for efficient
updates. We will show that by using $O(\log \log n)$ bits to store a
\emph{base} value $B$ and an array of offsets $M$ as a dense array,
plus a limited number of $(j,M[j])$ pairs as a \emph{sparse}
associative array, $m\log\log\log m$ + $O(m)$ bits suffice for the
sketch with high probability.

Suppose we feed $n$ distinct values to the
HyperLogLog sketch. Then each of the registers $M[j]$ can be treated as a
random variable, and the distribution of the register values is
determined by the following lemma.
\begin{lemma}[{\cite[Appendix~A]{XiaoCZL:2020}}]
  \label{lem:prmjeqk}
  After updating the HyperLogLog sketch of $m$ registers with $n$
  distinct values, the distribution of each $M[j]$ satisfies
  \[
      \Pr[M[j] = k] = \left\{
      \begin{array}{ll}
        \left(1 - \frac 1m\right)^n & \mathrm{if} \  k = 0\, , \\
        \left(1-\frac{1}{m2^k}\right)^n - \left(1-\frac{1}{m2^{k-1}}\right)^n &
                                                          \mathrm{otherwise}
                                                          \, .
      \end{array}
    \right.
  \]
\end{lemma}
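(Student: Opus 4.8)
The plan is to compute the cumulative distribution function $\Pr[M[j]\le k]$ directly and then recover the point masses by differencing consecutive values. The key observation is that, under the fully random hash function model, the pairs $(f(y_i),\rho(h(y_i)))$ for the $n$ distinct elements are mutually independent, and for a single element the two coordinates are themselves independent: $f(y_i)=j$ with probability $1/m$, while $\rho(h(y_i))=k$ with probability $2^{-k}$, since the first one-bit lands in position $k$ exactly when the $k-1$ leading bits vanish and the $k$th bit is set.

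First I would write the register value as $M[j]=\max\{\rho(h(y_i)):f(y_i)=j\}$, with the convention that the maximum over the empty set is $0$ (matching the array initialization). Then $M[j]\le k$ holds precisely when no element pushes the register past level $k$, that is, for each $i$ either $f(y_i)\ne j$ or $\rho(h(y_i))\le k$. For a single element the complementary event $\{f(y_i)=j \text{ and } \rho(h(y_i))>k\}$ has probability $\frac1m\cdot 2^{-k}=\frac{1}{m2^k}$, so by independence across the $n$ distinct elements
\[
\Pr[M[j]\le k]=\left(1-\frac{1}{m2^k}\right)^{n}\, .
\]
Taking $k=0$ yields $\Pr[M[j]=0]=(1-1/m)^n$, the first case of the claim, since any element hashed to register $j$ contributes $\rho\ge 1>0$. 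For $k\ge 1$ I would then use $\Pr[M[j]=k]=\Pr[M[j]\le k]-\Pr[M[j]\le k-1]$ and substitute the formula above to obtain the second case.

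The algebra is short; the points that need care are the modeling assumptions rather than the calculation. I would make explicit that (i) counting \emph{distinct} elements lets us treat the $n$ inputs as having independent hash values, because repeated elements reuse the same hash and affect neither the register assignment nor the $\rho$-value; and (ii) the identity $\Pr[\rho(h(y))>k]=2^{-k}$ is exact for a uniform $w$-bit draw whenever $k<w$, with the boundary behavior at $k=w$ (where $\rho$ saturates and $\rho(0)$ is left undefined) being harmless because $w=\Omega(\log n)$ forces the top register values to occur with vanishing probability. The main, and rather mild, obstacle is thus to state these assumptions cleanly so that the telescoping difference produces exactly the two-case formula with no hidden approximation.
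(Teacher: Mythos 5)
Your proof is correct. The paper does not actually prove this lemma itself --- it imports it by citation from Appendix~A of~\cite{XiaoCZL:2020} --- but your derivation (a single element ``pushes register $j$ past level $k$'' with probability $\frac{1}{m2^k}$, hence $\Pr[M[j]\le k]=\bigl(1-\frac{1}{m2^k}\bigr)^n$ by independence over the $n$ distinct inputs, then difference consecutive values) is the standard argument and is precisely the computation the paper records separately as Corollary~\ref{cor:prmjleqk}. The caveats you flag --- that distinctness of the inputs is what licenses treating the hash values as independent, and that $\Pr[\rho(h(y))>k]=2^{-k}$ is exact only for $k<w$ with a negligible boundary effect at $k=w$ --- are exactly the idealizations the paper (and the cited source) makes implicitly, so stating them is a strength rather than a gap.
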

\begin{corollary}
  \label{cor:prmjleqk}
  The distribution of $M[j]$ satisfies
  \[
    \Pr[M[j] \leq k] = \left(1-\frac{1}{m2^k}\right)^n
    \, .
  \]
\end{corollary}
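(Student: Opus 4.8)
The plan is to obtain this directly from Lemma~\ref{lem:prmjeqk} by summing the point masses, using $\Pr[M[j]\le k]=\sum_{i=0}^k \Pr[M[j]=i]$. First I would introduce the shorthand
\[
  a_i = \left(1-\frac{1}{m2^i}\right)^n ,
\]
and observe that the lemma can be read uniformly in terms of the $a_i$: since $2^0=1$ we have $\Pr[M[j]=0]=(1-1/m)^n=a_0$, and for every $i\ge 1$ the lemma gives exactly $\Pr[M[j]=i]=a_i-a_{i-1}$. The point of this rewriting is that the difference structure of the ``otherwise'' case is precisely a first difference of the sequence $(a_i)$.

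With this in hand the computation is a telescoping sum. I would write
\[
  \Pr[M[j]\le k] = a_0 + \sum_{i=1}^{k}\bigl(a_i-a_{i-1}\bigr) = a_k ,
\]
where all intermediate terms cancel, leaving only the top term $a_k=\left(1-\frac{1}{m2^k}\right)^n$, which is exactly the claimed expression. This is the whole argument; there is essentially no obstacle, and the single point that requires a moment's care is the boundary term at $i=0$, namely confirming that the $k=0$ branch of the lemma really equals $a_0$ so that it slots seamlessly into the telescoping rather than being a special case that must be treated separately.

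As an independent sanity check I would note that the same formula follows from a direct probabilistic argument, which also explains \emph{why} the expression is so clean. The event $M[j]\le k$ is exactly the event that no distinct element simultaneously lands in register $j$ and produces a $\rho$-value exceeding $k$; a fixed distinct element is routed to register $j$ with probability $1/m$ by $f$ and independently has $\rho(h(y))>k$ with probability $2^{-k}$, so it violates $M[j]\le k$ with probability $\frac{1}{m2^k}$. Since the $n$ distinct elements behave independently under the idealized random hash functions, each avoids causing a violation with probability $1-\frac{1}{m2^k}$, and multiplying over the $n$ elements recovers $\left(1-\frac{1}{m2^k}\right)^n$. Either route yields the corollary, but since the lemma is already available I would present the telescoping derivation as the main proof.
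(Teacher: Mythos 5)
Your proof is correct and matches the route the paper intends: the corollary is stated as an immediate consequence of Lemma~\ref{lem:prmjeqk}, and summing the point masses telescopes exactly as you describe, with the $k=0$ term $(1-1/m)^n = \left(1-\frac{1}{m2^0}\right)^n$ fitting seamlessly into the pattern. The direct probabilistic derivation you add as a sanity check is a nice bonus but not needed.
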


Note that Lemma~\ref{lem:prmjeqk} only applies to any particular
register $j$, but not to all as a whole, since the registers are
negatively dependent. Throughout this section, we are going to treat
the register values as real-valued, continuous random variables, as
the map $k\mapsto 1-\frac{1}{m2^k}$ is monotone, only rounding to
integers at the very end. We also need the following lemma.
\begin{lemma}
  \label{lem:registertails}
  Let $B=\log\frac{n}{m}$. Then, for $\Delta\in(0,B)$ and each
  $j\in[m]$,
  \begin{itemize}
  \item $\Pr[M[j] < B-\Delta] < \exp\left(-2^\Delta\right) <
    2^{-\Delta}$, and
  \item $\Pr[M[j] > B+\Delta]$ $< 2^{-\Delta}$.
  \end{itemize}
\end{lemma}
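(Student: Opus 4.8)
The plan is to read both bounds straight off Corollary~\ref{cor:prmjleqk}, which (under the continuous convention stated just before the lemma) gives the CDF $\Pr[M[j]\le k]=\left(1-\frac{1}{m2^k}\right)^n$. The only substitution needed is the defining identity $B=\log\frac nm$, equivalently $m2^B=n$, so that $m2^{B\pm\Delta}=n2^{\pm\Delta}$. Before starting I would note the ranges are sane: since $\Delta\in(0,B)$ we have $2^\Delta<2^B=n/m\le n$, so $2^\Delta/n<1$, and $2^{-\Delta}/n<1/n$, keeping every base positive and every Bernoulli argument in $[0,1]$.

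For the lower tail I treat $M[j]$ as continuous, so $\Pr[M[j]<B-\Delta]=\left(1-\frac{1}{m2^{B-\Delta}}\right)^n=\left(1-\frac{2^\Delta}{n}\right)^n$. Applying $1+x\le e^{x}$ with $x=-2^\Delta/n$ yields $\left(1-\frac{2^\Delta}{n}\right)^n\le\exp\!\left(-2^\Delta\right)$, which is the first claimed bound. For the second inequality $\exp(-2^\Delta)<2^{-\Delta}$ it suffices that $2^\Delta>\Delta\ln 2$; this holds because $2^\Delta>\Delta>\Delta\ln 2$ for every $\Delta>0$ (using $\ln 2<1$), completing the lower tail.

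For the upper tail I write $\Pr[M[j]>B+\Delta]=1-\left(1-\frac{1}{m2^{B+\Delta}}\right)^n=1-\left(1-\frac{2^{-\Delta}}{n}\right)^n$. Here Bernoulli's inequality $(1-x)^n\ge 1-nx$ with $x=2^{-\Delta}/n$ gives $\left(1-\frac{2^{-\Delta}}{n}\right)^n\ge 1-2^{-\Delta}$, hence $\Pr[M[j]>B+\Delta]\le 2^{-\Delta}$, and the inequality is strict since $n>1$ and $x>0$ (so Bernoulli is strict).

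I do not expect a genuine obstacle: the proof is a short chain of elementary inequalities ($1+x\le e^x$ for the lower tail and Bernoulli for the upper tail) applied to the exact CDF. The only points requiring care are bookkeeping ones — keeping track of strict versus non-strict inequalities under the continuous-approximation convention, and verifying the elementary step $\exp(-2^\Delta)<2^{-\Delta}$ rather than leaving it as folklore. If anything is ``hard'' it is merely making sure the continuous treatment is invoked consistently so that $\Pr[M[j]<B-\Delta]$ may be evaluated via the CDF at $B-\Delta$.
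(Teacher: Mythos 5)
Your proof is correct and follows essentially the same route as the paper: evaluate the CDF from Corollary~\ref{cor:prmjleqk} at $B\pm\Delta$, use $1-x\le e^{-x}$ for the lower tail and Bernoulli's inequality for the upper tail. You are in fact slightly more careful than the paper (you verify $\exp(-2^\Delta)<2^{-\Delta}$ explicitly and state Bernoulli in the correct direction, whereas the paper's text has the inequality $(1-x)^n<1-nx$ reversed as a typo), but the argument is the same.
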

\begin{proof}
  It is well-known that $(1-x)^n < \exp(-nx)$ for all $0<x<1$. 
  By Corollary~\ref{cor:prmjleqk} and treating $k$ as a real variable,
  \[
    \begin{split}
      \Pr[M[j] < B-\Delta] & = \left(1-\frac{1}{m2^{B-\Delta}}\right)^n \\
      & < \exp\left(-\frac{n}{m2^{B-\Delta}}\right) = \exp\left(-2^\Delta\right) < 2^{-\Delta} \, .
    \end{split}
  \]
  Likewise, by complement from Corollary~\ref{cor:prmjleqk} and by approximating $(1-x)^n<1-nx$ for $0<x<1$,
  \[
    \begin{split}
      \Pr[M[j] > B+\Delta] & = 1 - \left(1-\frac{1}{m2^{B+\Delta}}\right)^n \\
      & < \frac{n}{m2^{B+\Delta}} = 2^{-\Delta}\,.
    \end{split}
  \]
\end{proof}

We will start by showing that, for sufficiently large $n$ and $m$, the
sketch can be represented with $O(m\log\log m+\log\log n)$ bits, with high probability.

\begin{theorem}
  \label{thm:main:loglog}
  For $\beta\in(0,1)$ and $n>2m^2/\beta$, all register values can be represeted as offsets from the base value $B=\log\frac nm$ using at most $\lceil \log\log \frac{2m}{\beta}\rceil + 1 = O(\log\log m)$ bits, with probability at least $1-\beta$.
\end{theorem}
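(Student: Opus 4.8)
The plan is to pick a half-window width $\Delta$ that simultaneously makes a single register unlikely to escape the window $[B-\Delta,B+\Delta]$ — unlikely enough to survive a union bound over all $m$ registers — and keeps the window narrow enough that offsets from $B$ fit in the advertised number of bits. The balancing choice is $\Delta=\log\frac{2m}{\beta}$. Before using it, I would check that Lemma~\ref{lem:registertails} applies, i.e.\ that $\Delta\in(0,B)$. Since $B=\log\frac{n}{m}$, the condition $\Delta<B$ reads $\log\frac{2m}{\beta}<\log\frac{n}{m}$, which is exactly $n>2m^2/\beta$. This is precisely the role of the cardinality hypothesis: it guarantees the base value sits far enough above the lower tail for the window to be well defined.

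Next I would bound the failure probability. By Lemma~\ref{lem:registertails}, each register obeys $\Pr[M[j]<B-\Delta]<2^{-\Delta}$ and $\Pr[M[j]>B+\Delta]<2^{-\Delta}$, hence $\Pr[\,|M[j]-B|>\Delta\,]<2\cdot 2^{-\Delta}$. A union bound over the $m$ registers then gives $\Pr[\exists j:\,|M[j]-B|>\Delta]<2m\cdot 2^{-\Delta}=2m\cdot\frac{\beta}{2m}=\beta$. Therefore, with probability at least $1-\beta$, every register value lies in the window $[B-\Delta,B+\Delta]$ of width $2\Delta=2\log\frac{2m}{\beta}$.

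It remains to convert the window width into a bit count. Storing $M[j]$ as an offset from a fixed integer base anchored near $B-\Delta$ requires encoding only the distinct integer values inside a window of width $2\Delta$, of which there are at most $2\Delta$. The number of bits needed is thus $\lceil\log(2\Delta)\rceil=\lceil 1+\log\log\frac{2m}{\beta}\rceil=\lceil\log\log\frac{2m}{\beta}\rceil+1=O(\log\log m)$, matching the claim.

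The main obstacle is the off-by-one hiding in that last step: register values are integers, whereas $B$ and $\Delta$ are generally irrational, so the count of integers falling inside the window must be done carefully to land on exactly $2\Delta$ attainable offsets — and hence $\lceil\log(2\Delta)\rceil$ bits — rather than one more. This is where the section's convention of treating register values as continuous and rounding only at the very end earns its keep: confining the continuous values to the open window $(B-\Delta,B+\Delta)$ and rounding afterwards keeps the number of attainable integer offsets within the budget, so no extra bit is incurred.
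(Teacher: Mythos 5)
Your proof is correct and follows essentially the same route as the paper's: the same choice $\Delta=\log\frac{2m}{\beta}$, the same appeal to Lemma~\ref{lem:registertails} justified by $n>2m^2/\beta$, the same union bound over the $m$ registers, and the same conversion of the window width $2\Delta$ into $\lceil\log\log\frac{2m}{\beta}\rceil+1$ bits. Your closing remark on the integer-rounding subtlety is a fair observation about a point the paper passes over silently, but it does not change the argument.
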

\begin{proof}
  Fix the constant $0<\beta<1$. We will set
  $\Delta=\log\frac{2m}{\beta}$. By our condition on $n$, $\Delta<B$,
  so we apply Lemma~\ref{lem:registertails}, and get, for each
  $j\in[m]$,
  $\Pr[M[j]\not\in(B-\Delta,B+\Delta)] < 2\cdot 2^{-\Delta} =
  \frac{\beta}{m}$. By the union bound over the $m$ registers, we get
  that all registers are within this interval with probability at
  least $1-\beta$. Finally, by our choice of $\Delta$, we can encode
  an integer in the desired range of $(B-\Delta,B+\Delta)$ by using at
  most $\log(2\Delta) = \lceil \log\log\frac{2m}{\beta}\rceil + 1$
  bits.
\end{proof}

Theorem~\ref{thm:main:loglog} is similar to the HLL-Tailcut approach
of~\cite{XiaoCZL:2020}, and could indeed be used to show that the tailcut approach requires
asymptotically only few bits.  However, we can do better than this and
use only $\log\log\log m+O(1)$ bits per register by showing that
$O(m)$ bits suffice for representing the overflowing registers
sparsely as $(j,M[j])$ pairs.

\begin{theorem}
  \label{thm:main:logloglog}
  Suppose $n>4m\log m$. Then at least $m-\frac{m}{\log m}$ register values can be represented as offsets from the base value $B=\log\frac{n}{m}$ using at most $\lceil \log(2+\log\log m)\rceil+1 = \log\log\log m+O(1)$ bits, with probability at least $1-\exp(-m/(6\log m))$. 
\end{theorem}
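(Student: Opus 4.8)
The plan is to shrink the window of Theorem~\ref{thm:main:loglog} from half-width $\Theta(\log m)$ down to half-width $\Delta = 2 + \log\log m$, which is exactly the value for which an integer offset lying in the open interval $(B-\Delta, B+\Delta)$, a window of width $2\Delta$, can be stored in $\lceil\log(2\Delta)\rceil = \lceil\log\bigl(2(2+\log\log m)\bigr)\rceil = \lceil\log(2+\log\log m)\rceil + 1$ bits, matching the claimed register width. With so narrow a window we can no longer keep \emph{every} register inside it; instead the goal is to show that only a $1/\log m$ fraction of registers spill out, so that those few can be relegated to the sparse part of the representation. First I would verify that the hypothesis $\Delta < B$ needed by Lemma~\ref{lem:registertails} is precisely the assumption $n > 4m\log m$: indeed $2 + \log\log m < \log\frac{n}{m}$ rearranges to $4\log m < n/m$.

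Next I would bound the probability that a single register leaves the window. By Lemma~\ref{lem:registertails}, $\Pr[M[j] \notin (B-\Delta, B+\Delta)] < 2\cdot 2^{-\Delta} = 2^{1-\Delta} = \frac{1}{2\log m}$. Writing $X_j$ for the indicator that register $j$ is outside the window and $X = \sum_{j\in[m]} X_j$ for the number of overflowing registers, linearity of expectation gives $E[X] < \frac{m}{2\log m} =: \mu$. The target failure probability $\exp(-m/(6\log m))$ is exactly what a multiplicative Chernoff bound yields here: applying Lemma~\ref{lem:chernoff} with $\epsilon = 1$, and with the mean replaced by the upper bound $\mu$ (legitimate for the upper tail, since the moment-generating-function argument only uses $E[X]\le\mu$ at a fixed threshold), gives $\Pr[X > m/\log m] = \Pr[X > 2\mu] \le \exp(-\mu/3) = \exp(-m/(6\log m))$. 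On the complementary event at least $m - m/\log m$ registers sit inside the window and are encodable in the stated number of bits, which is the conclusion, provided the $X_j$ are negatively dependent as Lemma~\ref{lem:chernoff} demands.

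That last proviso is the step that needs the most care, and the one I expect to be the main obstacle. The register values are negatively associated, being coordinatewise maxima of a balls-into-bins allocation, and monotone functions of negatively associated variables on disjoint coordinates preserve the property; the trouble is that the overflow indicator $X_j$ is \emph{not} monotone in $M[j]$, since it fires both when $M[j]$ is too small and when it is too large, so negative dependence of the $X_j$ does not follow directly. I would resolve this by splitting $X_j = X_j^{\mathrm{lo}} + X_j^{\mathrm{hi}}$ into the two one-sided events and observing that the lower tail is doubly-exponentially rare: by Lemma~\ref{lem:registertails}, $\Pr[M[j] < B-\Delta] < \exp(-2^\Delta) = \exp(-4\log m) = m^{-\Theta(1)}$, so the expected number of registers below the window is at most $m\cdot m^{-\Theta(1)} = o(1)$ and with high probability none occur. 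The overflow count is therefore governed, up to a vanishing additive term, by the high-tail indicators $X_j^{\mathrm{hi}} = \mathbf 1[M[j] > B+\Delta]$, which are monotone non-decreasing and hence genuinely negatively dependent, so Lemma~\ref{lem:chernoff} applies to them cleanly; since $E[X^{\mathrm{hi}}] < \frac{m}{4\log m}$ is only half of $\mu$, folding the negligible lower-tail contribution back in still yields the stated bound with room to spare in the constant.
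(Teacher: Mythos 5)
Your proof follows the paper's argument exactly: the same choice of $\Delta = 2+\log\log m$, the same per-register bound $p < 2\cdot 2^{-\Delta} = 1/(2\log m)$ from Lemma~\ref{lem:registertails}, and the same application of the Chernoff bound of Lemma~\ref{lem:chernoff} to the sum of overflow indicators at threshold $m/\log m = 2E[X]$, yielding $\exp(-m/(6\log m))$. The one place you go beyond the paper is the negative-dependence caveat: the paper simply asserts that the two-sided indicators $X_j$ are negatively dependent, whereas you correctly note that the indicator of $M[j]\notin(B-\Delta,B+\Delta)$ is not a monotone function of $M[j]$, and you patch this by discarding the doubly-exponentially-rare lower tail ($\exp(-2^\Delta)=\exp(-4\log m)$ per register) and applying the Chernoff bound only to the monotone upper-tail indicators --- a legitimate tightening of a step the paper leaves implicit.
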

\begin{proof}
  We will set $\Delta = 2+\log\log m$. By our condition on $n$, $\Delta<B$, so we can apply Lemma~\ref{lem:registertails}. For each $j\in[m]$, we define an indicator variable $X_j = \llbracket M[j]\not\in(B-\Delta,B+\Delta)\rrbracket$. We note that $X_j\sim\mathrm{Bernoulli}(p)$. We get from Lemma~\ref{lem:registertails} that $p$ satisfies 
  \[
    p = \Pr[X_j = 1] < 2\cdot 2^{-\Delta} = \frac{1}{2\log m} \, .
  \]
  We then define the sum variable $X=\sum_{j\in[m]}X_j$ and observe that, by linearity of expectation, $E[X] < m/(2\log m)$. Since the variables $X_j$ are negatively dependent, we can apply the Chernoff bound of Lemma~\ref{lem:chernoff} to bound the number of registers whose values do not fall in our desired interval
  \[
    \Pr[X>\frac{m}{\log m}] \leq \exp\left(-\frac{m}{6\log m}\right) \, .
  \]
  Finally, we note that by our choice of $\Delta$, any integer in $(B-\Delta,B+\Delta)$ can be encoded using $\lceil \log(2\Delta)\rceil = \lceil\log(2+\log\log m)\rceil + 1 = \log\log\log m+O(1)$ bits, and the sparse representation requires $O\left(\frac{m}{\log m}\right)\cdot O(\log{m}) = O(m)$ bits by the fact that the register indices are in $[m]$ and Theorem~\ref{thm:main:loglog}.
\end{proof}

As a corollary of Theorem~\ref{thm:main:logloglog}, we obtain an
algorithm that we might call HyperLogLogLogLog that uses
$m\log\log\log\log m+O(m\log\log\log m / \log\log m+\log\log n)$ bits for the
sketch if we use a succinct dictionary for the sparse
representation. We believe this algorithm is only a theoretical
curiosity, but present it for completeness.

\begin{corollary}
  \label{cor:main:loglogloglog}
  Suppose $n>4m\log\log m$. Then at least $m-\frac{m}{\log\log m}$ register values can be represented as offsets from the base value $B=\log\frac{n}{m}$ using at most
  \[
    \lceil \log(2+\log\log\log m)\rceil + 1 = \log\log\log\log m + O(1)
  \]
  bits, with probability at least $1-\exp(-m/(6\log\log m))$. Using a
  succinct dictionary for the sparse representation, the sketch has a
  total size of $m\log\log\log\log m+O(m\log\log\log m/\log\log
  m+\log\log n)$ bits.
\end{corollary}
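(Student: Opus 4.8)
The plan is to mirror the proof of Theorem~\ref{thm:main:logloglog} almost verbatim, incrementing the nesting of logarithms by one. I would set $\Delta = 2 + \log\log\log m$ and first check that the hypothesis $n > 4m\log\log m$ guarantees $\Delta < B$: indeed $B = \log\frac{n}{m} > \log(4\log\log m) = 2 + \log\log\log m = \Delta$, so Lemma~\ref{lem:registertails} applies. For each $j \in [m]$ I would introduce the indicator $X_j = \llbracket M[j] \notin (B - \Delta, B + \Delta)\rrbracket \sim \mathrm{Bernoulli}(p)$ and read off $p < 2 \cdot 2^{-\Delta} = \frac{1}{2\log\log m}$ from Lemma~\ref{lem:registertails}. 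Linearity of expectation then gives $E[X] < \frac{m}{2\log\log m}$ for $X = \sum_{j} X_j$, and since the registers are negatively dependent the Chernoff bound of Lemma~\ref{lem:chernoff} (with $\epsilon = 1$, using the upper bound on $E[X]$) yields $\Pr[X > \frac{m}{\log\log m}] \leq \exp(-\frac{m}{6\log\log m})$. This establishes that at least $m - \frac{m}{\log\log m}$ registers lie in the window with the stated probability, and the bit count follows from $\lceil\log(2\Delta)\rceil = \lceil\log(2 + \log\log\log m)\rceil + 1 = \log\log\log\log m + O(1)$.

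For the total size I would account for three contributions. The dense array stores $m$ offsets of $\log\log\log\log m + O(1)$ bits each, contributing $m\log\log\log\log m + O(m)$ bits; storing the base value $B \leq w$ costs $O(\log\log n)$ bits. The genuinely new ingredient is the sparse part, which I would place in a succinct dictionary over the $t \leq \frac{m}{\log\log m}$ overflowing register indices $j \in [m]$. Invoking a standard succinct dictionary (supporting membership and value retrieval), the key set is representable in $\log\binom{m}{t} + o\!\left(\log\binom{m}{t}\right)$ bits, and since $\log\binom{m}{t} \leq t\log\frac{em}{t} = O\!\left(\frac{m}{\log\log m}\log\log\log m\right) = O\!\left(\frac{m\log\log\log m}{\log\log m}\right)$, the keys fit in $O\!\left(\frac{m\log\log\log m}{\log\log m}\right)$ bits. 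Each overflow value is an offset inside the $O(\log m)$-wide window guaranteed by Theorem~\ref{thm:main:loglog}, hence $O(\log\log m)$ bits, attached to the dictionary. Collecting terms and folding the lower-order $O(m)$ contribution into the leading $m\log\log\log\log m$ term (note $m = o(m\log\log\log\log m)$) gives the claimed $m\log\log\log\log m + O(\frac{m\log\log\log m}{\log\log m} + \log\log n)$ bits.

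The main obstacle I anticipate is not the probabilistic half — which is a routine re-run of Theorem~\ref{thm:main:logloglog} — but the space accounting around the succinct dictionary. One must pin down a concrete succinct-dictionary construction that simultaneously stores the key set in $\log\binom{m}{t}(1 + o(1))$ bits and supports the lookups and updates the sketch needs, and one must be careful about which terms survive in the error bound: the succinct key storage $O(\frac{m\log\log\log m}{\log\log m})$ is the quantity the statement highlights, whereas the $O(m)$ arising from the dense offsets and the attached overflow values is dominated by, and therefore absorbed into, the leading $m\log\log\log\log m$ term.
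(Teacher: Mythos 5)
Your proposal is correct and follows essentially the same route as the paper: rerun the proof of Theorem~\ref{thm:main:logloglog} with $\Delta = 2+\log\log\log m$, then bound the sparse part via the succinct-dictionary bound $O(s\log\frac{m}{s})$ with $s=m/\log\log m$ and universe derived from Theorem~\ref{thm:main:loglog}. If anything, you are more explicit than the paper about where the overflow \emph{values} and the residual $O(m)$ lower-order terms go, which is the one place the paper's own accounting is terse.
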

\begin{proof}
The proof is otherwise the same as in Theorem~\ref{thm:main:logloglog}
except we use $\Delta = 2+\log\log\log m$. For the sparse
representation, we note that it is known that succinct dictionaries
(see, for example, \cite{Pagh:2001}) require $O(s\log\frac{m}{s})$
bits where $s$ is the number of distinct elements in the dictionary
and $m$ is the size of the universe. In particular, the same $m$ as in
our case works since the elements are pairs in $[m]\times [\log m]$ by
Theorem~\ref{thm:main:loglog}, with high probability, so the number of
bits required in the standard representation for any element in the
universe is $O(\log m)$. Applying the bound on the succinct dictionary
size, together with $s=m/\log\log m$, by our choice of $\Delta$, and
the number of bits required for the base value,
we get the bound of $m\log\log\log\log m + O(m\log\log\log m /
\log\log m+\log\log n)$ bits on the total size of the sketch.
\end{proof}

Finally, we note that storing the base value $B$ requires $O(\log\log
n)$ bits which we can afford since we are going to need $O(\log n)$
auxiliary space for storing the computed hash values anyway. This is
also reflected in the more precise bound of~\cite{KaneNW:2010}.

\subsection{Practical algorithm}

The practical applicability of Theorems~\ref{thm:main:logloglog} is
somewhat limited as the $\log\log\log m$ is an extremely slowly
growing function.  Indeed, it is commonly believed that there are
roughly $10^{80}$ atoms in the observable universe, and
$\log\log\log 10^{80} < 3.01$ which is hardly a large
number. Furthermore, the proofs only address the size of the resulting
sketch and make use of $n$ which we cannot do in practice.

However, numerical simulations reveal that using 3 bits per register
tends to yield small sketches for practical values of $m$ and $n$, and
so the intuition of the sparse/dense split seems well applicable in
practice. We will introduce a separate parameter $\kappa$ that
controls the number of bits per dense register. For our
implementation, we fix $\kappa=3$ (along with $w=64$). This is quite close to the entropy bound of $2.832$~\cite{Lang:2017}, so we cannot even hope to do much better.

Let $M'\in[w]^m$ be the registers of the uncompressed HyperLogLog sketch. 
We denote the
fixed-size dense array of offsets by $M\in[\kappa]^m$, and the base
value $B\in[w]$. The dense part corresponds to the ``fat region'' of
the distribution of values around the expectation, and the registers
in the fat region satisfy
\begin{equation}
  \label{eq:denseregister}
  B\leq M'[j] < B+2^\kappa \, ,
\end{equation}
and we store $M[j] = M'[j] - B$ using $\kappa$ bits.  The sparse part
corresponds to registers $j$ that do \emph{not} satisfy
Equation~\eqref{eq:denseregister}, and they are stored in an
associative array $S\subseteq [m]\times[w]$ whose elements are $(j,r)$
pairs where $r=M'[j]$, or equivalently, $S$ maps the sparsely
represented $j$ to their corresponding register values. Each element
of $S$ takes $\log m + \log w$ bits.

Thus, the HyperLogLogLog data structure is a 3-tuple $(S,M,B)$, and
initially $S=\emptyset$, $M$ is initialized to all-zeros, and
$B=0$. The size of the data structure is variable, and is determined
unambiguously by the number of elements stored in $S$, or
equivalently, the choice of the base value $B$. The total size of the
data structure is
\[
  m\kappa+|S|\cdot (\log m+\log w) \, .
\]
If $B$ is
chosen so as to minimize the number of bits required, we say that
\emph{the sparse-dense split is optimal}.

After each update, if any of the register values was changed, we
maintain as an invariant that the split is optimal by running a
subroutine which we call \procedurecall{compress}{}. The subroutine determines
the optimal $B'$, and if this differs from the present $B$, performs a
\procedurecall{rebase}{} operation whereby registers are reassigned into sparse
or dense representations, depending on which side of the split determined by $B'$ they
fall.

The \procedurecall{update}{} procedure is shown as pseudocode in
Algorithm~\ref{alg:hyperlogloglog}. The full runtimes of each
operation depends on the actual choice of data structures, but it is
immediate that the compression routine is rather expensive: we need to
try up to $w$ different new base values, and evaluating the size of
the resulting data structure requires $\Omega(m)$ operations for each
different proposed new base value, so the compression routine requires
at least $\Omega(mw)$ operations. However, it is easy to see that if
the $n$ is sufficiently larger than $m$,
then only very few
elements actually trigger an update; thus, as the next lemma
shows, the update times are actually amortized constant over a
sufficiently large $n$. We note that Lemma~\ref{lem:amortizedconstant}
is not tight, as it makes some rather crude and pessimistic
approximations.

\begin{lemma}
  \label{lem:amortizedconstant}
  For sufficiently large $n$ satisfying $n/(\log n)^2>m^2$,
  the updates are amortized constant,
  with high probability.
\end{lemma}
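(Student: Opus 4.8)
The plan is to bound the total running time summed over the entire stream of $s\ge n$ elements and then divide by $s$. I would decompose the work of the $i$th update into the \emph{basic update}---evaluating $h$ and $f$, locating register $f(y_i)$, and comparing against $\rho(h(y_i))$---which always costs $O(1)$, and the \emph{compression work}, namely the call to \textsc{compress} together with any triggered \textsc{rebase}, which is executed only when the update strictly increases a register value. The basic updates contribute $O(s)$ in total, hence $O(1)$ amortized, so it suffices to show that the total compression work over the stream is $O(s)$ with high probability; since $s\ge n$, it is enough to bound it by $O(n)$.

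The key step is to bound the number of updates that trigger \textsc{compress}, which (by the invariant that \textsc{compress} runs exactly when a register changes) equals the number of updates that strictly increase some register. Each register value is non-decreasing throughout the stream and integer-valued, so register $j$ can change at most $M'[j]$ times, where $M'[j]$ now denotes its \emph{final} value. I would then invoke Theorem~\ref{thm:main:loglog}: the hypothesis $n/(\log n)^2>m^2$ implies $n>2m^2/\beta$ for a suitable constant $\beta$, so with probability at least $1-\beta$ every register satisfies $M'[j]<B+\Delta=\log\frac nm+O(\log m)=O(\log n)$. Consequently, with high probability the total number of \textsc{compress} calls is at most $\sum_{j\in[m]}M'[j]=O(m\log n)$.

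Next I would account for the cost of one \textsc{compress}. As already observed, determining the optimal base value requires scanning the up-to-$w$ candidate bases and evaluating the resulting size in $O(m)$ time for each, so a single call costs $O(mw)$; taking $w=\Theta(\log n)$ (the word size is $\Omega(\log n)$ by the pigeon-hole argument, and we treat it as $\Theta(\log n)$), this is $O(m\log n)$. Multiplying the number of calls by the per-call cost yields a total compression cost of $O(m\log n)\cdot O(m\log n)=O(m^2(\log n)^2)$ with high probability. The \textsc{rebase} operations are subsumed: $B$ is non-decreasing and bounded by $w$, hence changes $O(\log n)$ times, each rebase costing $O(m)$, for a negligible $O(m\log n)$ overall. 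Adding the $O(s)$ basic-update cost, the stream is processed in time $O(s+m^2(\log n)^2)$, which is $O(s)$ precisely because the hypothesis gives $m^2(\log n)^2<n\le s$. Dividing by the $s$ update operations gives amortized $O(1)$.

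The main obstacle is getting the counting in the second paragraph right: the relevant quantity is the number of \emph{strict increases} of the $m$ monotone integer register sequences, bounded by the sum of the final register values rather than by a naive per-update estimate. Under $w=\Theta(\log n)$ even the deterministic bound $M'[j]\le w$ already yields $O(mw)=O(m\log n)$ and would suffice, but the high-probability register-height bound $B+\Delta=O(\log n)$ of Theorem~\ref{thm:main:loglog} is the natural source of the ``with high probability'' qualifier and is what keeps the count at $O(m\log n)$ were the word size allowed to exceed $\Theta(\log n)$. A secondary point to verify is that idempotent repeated updates never increase a register and hence never trigger \textsc{compress}, so the count of triggering updates is governed by the number of distinct elements $n$, not by the total stream length $s$.
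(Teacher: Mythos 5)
Your proof is correct and follows essentially the same route as the paper's: bound every register by $O(\log n)$ with high probability, conclude that \textsc{compress} is triggered at most $O(m\log n)$ times at a cost of $O(m\log n)$ each, and observe that the resulting $O(m^2(\log n)^2)$ total is $O(n)$ under the hypothesis. The only cosmetic differences are that you invoke Theorem~\ref{thm:main:loglog} where the paper derives the register tail bound directly from Corollary~\ref{cor:prmjleqk}, and your side remark that $B$ is non-decreasing is not obviously justified---but it is also unnecessary, since the rebase cost is dominated even if $B$ changed on every call.
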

\begin{proof}
  Fix a constant $\gamma \geq 1$.
  From Corollary~\ref{cor:prmjleqk} and by approximating $(1-x)^n \geq 1-nx$, we get for each $j$ that
  \[
    \Pr[M[j] > k] \leq \frac{n}{m2^k} \, .
  \]
  Let $k$ be the largest value that we expect to see. Setting
  $k=\log(\gamma n)$, we get by the union bound that all registers are
  below this value at probability $1-1/\gamma$.

  Let us then consider the total amount of work for the
  compressions. The compression is triggered at most $km$ times. Each
  time the compression routine is triggered, the amount of work
  required is $O(km)$ since we need to try at most $k$ base values and
  perform $O(m)$ work for each base value candidate. The total amount of compression work is thus $O(m^2k^2) = O(m^2(\log(\gamma n))^2)$.
  In addition to compressions, we also need $O(1)$ work for every update. Average work for $n$ distinct elements is thus$\frac{O(n)+O(m^2(\log n)^2)}{n} = O(1)$
  for sufficiently large $n$ by our assumption that $n/(\log n)^2>m^2$, thus an
  amortized constant.
\end{proof}

\begin{algorithm}[t]
  \DontPrintSemicolon
  \SetKwProg{Proc}{Procedure}{}{{End Procedure}}
  \SetKwFunction{get}{get}
  \SetKwFunction{update}{update}
  \SetKwFunction{compress}{compress}
  \SetKwFunction{rebase}{rebase}
  \SetKwFunction{countsparse}{countsparse}
  \SetFuncSty{textsc}
  \caption{HyperLogLogLog update procedure.}
  \label{alg:hyperlogloglog}
  Let $m$ be a power of two. Initialize array $M\in[\kappa]^m$ to be
  all zeros, the associative array $S\subseteq [m]\times[w]$ to be
  $S\gets \emptyset$, and the base $B\gets 0$.
  Let $h : \mathcal U \to [2^w]$ and $f : \mathcal U \to [m]$ be fixed
  random hash function. \;
  \Proc{\get{$S,M,B,j$}}{
    \Return $\left\{\begin{array}{ll}
                      S[j] & \textrm{if } j\in S \\
                      B+M[j] & \textrm{otherwise}
                    \end{array}\right.$\;
  }
  \Proc{\update{$S,M,B,y$}}{
    $j \gets f(y)$\;
    $r \gets \rho(h(y))$\;
    $r_0 \gets$\get{$S$,$M$,$B$,$j$}\;
    \If{$r>r_0$}{
      \uIf{$B\leq r < B+2^\kappa$}{
        $M[j] \gets r-B$\;
        \If{$j\in S$}{
          $\texttt{del }S[k]$\;
        }
      }
      \Else{
        $S[k]\gets r$\;
      }
      \compress{$S,M,B$}\;
    }
  }
  \Proc{\compress{$S,M,B$}}{
    $B'\gets \displaystyle\argmax_{B''\in[w]} |\{j : B'' \leq
    \get{S,M,B,j} < B'' + 2^\kappa\}|$\;
    \If{$B'\neq B$}{
      \rebase{$S,M,B,B'$}\;
    }
  }
  \Proc{\rebase{$S,M,B,B'$}}{
    \For{$j\in[m]$}{
      $r\gets \get{S,M,B,j}$\;
      \uIf{$B'\leq r<B'+2^\kappa$}{
        $M[j]\gets r-B'$\;
        \If{$j\in S$}{
          $\texttt{del }S[j]$\;
        }
      }
      \Else{
        $S[j]\gets r$\;
      }
    }
    $B\gets B'$\;
  }
\end{algorithm}

The other supported operations are the regular
HyperLogLog operations \procedurecall{estimate}{}, and
\procedurecall{merge}{}. Estimation is performed exactly as in the
case of HyperLogLog, with the exception that instead of array access,
we need to use the auxiliary \procedurecall{get}{} function to access
the elements. The \procedurecall{merge}{} operation is performed in
the compressed domain, so at no point is there need to perform a full
decompression into the regular HyperLogLog representation. The
\procedurecall{merge}{} operation is presented as pseudocode in 
Algorithm~\ref{alg:hyperlogloglogmerge}.

\begin{algorithm}[t]
  \DontPrintSemicolon
  \SetKwProg{Proc}{Procedure}{}{{End Procedure}}
  \SetKwFunction{get}{get}
  \SetKwFunction{update}{update}
  \SetKwFunction{compress}{compress}
  \SetKwFunction{rebase}{rebase}
  \SetKwFunction{countsparse}{countsparse}
  \SetKwFunction{merge}{merge}
  \SetFuncSty{textsc}
  \caption{HyperLogLogLog merge procedure.}
  \label{alg:hyperlogloglogmerge}
  \Proc{\merge{$S_1,M_1,B_1,S_2,M_2,B_2$}}{
    Initialize $S\gets \emptyset$, $M\in[2^\kappa]^m$ as zeros, and $B\gets\max\{B_1,B_2\}$\;
    \For{$j\in[m]$}{
      $r_1\gets$\get{$S_1,M_1,B_1$}\;
      $r_2 \gets$\get{$S_2,M_2,B_2$}\;
      $r\gets\max\{r_1,r_2\}$\;
      \uIf{$B\leq r<B+2^\kappa$}{
        $M[j] \gets r-B$\;
      }
      \Else{
        $S[j]\gets r$\;
      }
    }
    \compress{$S,M,B$}\;
    \Return $(S,M,B)$\;
  }
\end{algorithm}

\subsection{Performance improvements and heuristics}

There are several tricks that can be done to improve the practical
performance of the HyperLogLogLog algorithm. Some of these tricks have
effects on theoretical guarantees, some don't, and others are simply a
matter of implementation choices.

Importantly, choosing suitable data structures enables cache-friendly
linear access through the entire HyperLogLogLog structure, and we need
not actually use a potentially expensive \procedurecall{get}{}
function in the \procedurecall{compress}{}, \procedurecall{rebase}{},
or \procedurecall{merge}{} subroutines.

Furthermore, the updates are determined by pairs of random variables
$(j,r)$ where $j$ is distributed uniformly over $[m]$ and $r$ follows
a geometric distribution. This means that in most cases, \emph{no
  update} takes place after seeing sufficiently many elements, as the
vast majority of the $r$ values encountered are very small. If we use
$\log w$ bits to store the minimum register value, we can terminate
the \procedurecall{update}{} process early without any effect on
theoretical guarantees, without having to even look up the actual
register value.

Also, it is obvious that any reasonable base value $B$ should be equal
to an actual register value, so we need not try all $w$ options; also,
trying candidate base values in an ascending order enables us to
maintain a lower bound on the number of sparse elements, which leads
to a possible early termination of the \procedurecall{compress}{}
routine without loss of theoretical guarantees. We include these
optimizations in our implementation of HyperLogLogLog.

Considering the behavior of the random $(j,r)$ pairs, we see that, in
general, the behavior is rather benign,\footnote{This is assuming that
  we are \emph{not} dealing with an adversary who has access to the
  hash function or can affect the coin tosses when selecting the hash
  function; in this setting, a worst-case input can cause the
  algorithm to fail catastrophically. This is, however, a well-known
  property inherent to HyperLogLog itself, as the algorithm is not
  robust against adversarial input~\cite{PatersonR:2021}.} and trying
all possible base values in the \procedurecall{compress}{} routine is
mostly useless, as even though it is possible to construct nasty
corner cases that require unexpected rebasing operations to maintain
optimality, these seldom occur in practice with random
data. Therefore, we implement the following changes to a variant of
the algorithm we call HyperLogLogLog$^*$: we only call
\procedurecall{compress}{} if the size of the data structure needs to
be increased (that is, a new element needs to be added into $S$), and
we only try the next possible register value that is larger than $B$,
and omit all other choices. Experiments show that these heuristics
have little effect on the actual size of the resulting data structure,
but they improve the actual runtimes.

We also provide as a baseline a variant which we call HyperLogLogLogB
that fixes $B$ to be the minimum register value, and maintains a
counter that records the number of minimum-valued
registers. Functionally, this variant behaves like
HLL-Tailcut~\cite{XiaoCZL:2020}, except instead of allowing for error
when the register values overflow, it stores them sparsely. While
this is the fastest variant, as the compression becomes nearly
trivial, experiments show that the sketches are noticeably larger.

\section{Implementation}
\label{sect:implementation}

\subsection{Overview}

We provide a C++ implementation of the HyperLogLogLog,
Hyper\-Log\-Log\-Log$^*$, and HyperLogLogLogB algorithms, along with a
comparable implementation of the vanilla HyperLogLog, and an
entropy-compressed version of the HyperLogLog, compressed using the
Zstandard library.\footnote{\url{https://facebook.github.io/zstd/}} We
also provide our full experimental pipeline that can be built as a
Docker container for reproducibility. The code is available
online\footnote{\url{https://github.com/mkarppa/hyperlogloglog}}
under the MIT license. 
Functionally, our
implementation is a drop-in replacement for HyperLogLog and enables
conversion between the compressed and uncompressed representations,
whilst producing the exact same estimates.

The guiding principles for constructing the implementation have been
to minimize the memory usage and enable cache-friendly linear access
through the data structure, at the cost of potentially losing
optimization opportunities that would require auxiliary space. Our
implementation assumes a 64-bit environment, which is reflected in
the design choices for the data structures and hash functions.

\subsection{Data structures}

We implement both the array $M$ and the associative array $S$ as
bit-packed arrays of 64-bit words. An array of $n$ elements of size
$s$ bits will occupy $\lceil\frac{ns}{64}\rceil\cdot 64$ bits of
memory; importantly, we allow elements to cross word boundaries. For
the HyperLogLog implementation, the array $M$ consists of 6-bit
elements, so the array size is $\lceil \frac{6m}{64} \rceil \cdot 64$
bits, and for HyperLogLogLog, we use 3 bits per element, so the array
size is $\lceil \frac{3m}{64} \rceil \cdot 64$ bits.

We implement the associative array by concatenating the key-value
pairs into integers where the more significant bits are occupied by
the key and the less significant bits by the value. We maintain the
bit-packed array sorted in ascending order by applying insertion sort
after each insertion. We perform lookups by binary search in
$O(\log|S|)$ time. For the HyperLogLogLog associative array $S$, the
elements are of size $\log m + 6$, since we use 64-bit hash
functions. Maintaining the associative array $S$ sorted enables linear iteration over the entire HyperLogLogLog data structure.

\subsection{Hash functions}

Our implementation accepts unsigned 64-bit integers and 8-bit byte
strings as input. We also accept raw $(j,r)\in[m]\times[w]$ integer
pairs as input for evaluating runtimes without hash function
evaluations.

For the hash function $h:\mathcal U \to [2^w]$, we use Google's
Farmhash.\footnote{https://github.com/google/farmhash} Specifically
we use the function \texttt{Fingerprint} for 64-bit integers, and
\texttt{Hash64} for strings.

We derive the hash function $f : \mathcal U \to [m]$ from $h$ by
applying Fibonacci Hashing~\cite[Section~6.4]{Knuth:1998} on the hash
value $h(y)$, that is, we multiply the hash value by
$9e3779b97f4a7c15_{16}$ modulo $2^{64}$ and take the $\log m$ most
significant bits of the result.

\subsection{Entropy compression}

We have implemented an entropy-compressed version of HyperLogLog as a
baseline. The implementation behaves like vanilla HyperLogLog, but
the array is compressed with Zstandard library after each update. The
number of decompressions is limited by maintaining the information
about the minimum value in a separate variable, much like in the case
of our HyperLogLogLog implementation, and decompression is only
triggered if the encountered $\rho$-value is at least as large as the
minimum value in the registers. The idea of using entropy compression
is an obvious alternative, and we use Zstandard with compression level
set to 1, as suggested in~\cite{Lang:2017}.

\section{Experiments}
\label{sect:experiments}

\begin{table}[t]
  \begin{center}
    \caption{Implementations used in experiments.}
    \label{tbl:implementations}
    \begin{tabular}{@{}l@{\,\,}p{10em}@{\,\,}l@{\,\,}p{8em}@{}}
      \toprule
      Name & Description & Language & Notes \\
      \midrule
      \texttt{HLL}   & Our HyperLogLog & C++ & 6 bits per element\\
      \texttt{HLLL}  & Our HyperLogLogLog & C++ & Sorted array\\
      \texttt{HLLL*} & Our HyperLogLogLog$^*$ & C++ & Includes
                                                      heuristics\\
      \texttt{HLLLB} & Our HyperLogLogLogB & C++ & $B$ is minimum \\
      \texttt{HLLZ} & Zstd-compressed \texttt{HLL} & C++, C & Zstd is
                                                                 in C \\
      \texttt{AHLL8} & Apache HyperLogLog & C++ & 8 bits
                                                  per
                                                  element
      \\
      \texttt{AHLL6} & Apache HyperLogLog & C++ & 6 bits
                                                  per
                                                  element
      \\
      \texttt{AHLL4} & Apache HyperLogLog & C++ & 4 bits
                                                  per
                                                  element
      \\
      \texttt{ACPC} & Apache CPC & C++ & Compressed Probability Counting
      \\
      \texttt{Zeta} & Google ZetaSketch & Java & HyperLogLog++\\
      \bottomrule
    \end{tabular}
  \end{center}
\end{table}
\subsection{Experimental setup}

We evaluate a number of implementations by varying the number of
registers~$m$ as powers of two from $2^4$ to $2^{18}$, and generate random
input of~$n$ elements for $n=2^{\frac{i}{2}}$ for $i=\{4,5,\ldots,60\}$,
that is, $n$ is the power of a square root of 2 from $2^4$ to
$2^{30}$, rounded to the nearest integer.
The data consists of random unsigned 64-bit integers,
8-character-long alphanumeric random ASCII strings, and $(j,r)$-pairs
where $j$ has been drawn from $[m]$ uniformly at random and
$r\sim\mathrm{Geom}(0.5)$. The $(j,r)$ input is only supported by our
own implementations. We also evaluate our hash functions separately,
without performing a full update sequence.

We compare our implementations against Apache
DataSketches\footnote{\url{https://datasketches.apache.org/}}
implementation of HyperLogLog, Apache DataSketches implementation of
Compressed Probability Counting (CPC)~\cite{Lang:2017}, and Google's
ZetaSketch\footnote{\url{https://github.com/google/zetasketch}}
implementation of HyperLogLog++~\cite{HeuleNH:2013}. The
implementations are listed in Table~\ref{tbl:implementations}.

All experiments
are run in a Docker container that is managed by a Python
script. Data is generated by a separate C++ program, and the data is
then passed to a wrapper program that stores the data in RAM, so as to
avoid I/O issues.

There are two types of experiments: update experiments and merge
experiments. In the update case, the wrapper constructs a sketch by
feeding the data one element at a time. Once the sketch has been
constructed, the wrapper reports the time it took to construct the
sketch, the cardinality estimate computed from the sketch, and bit
size of the sketch.
For the merge experiment, we first divide the
data into two
equal-sized chunks and construct one sketch for each chunk.
We then report the time it takes to
construct the merged sketch from the two sketches. In all cases, 10
independent repetitions were performed with different data, but the
same data supplied for each implementation at a given $m$, $n$, and
repetition.

The reported bit sizes for our implementations are $6m$ for
\texttt{HLL}, 8 times the number of bytes occupied by the Zstd
compressed register array for \texttt{HLLZ}, and
$|S|\cdot(\log m+6)+3m$ for \texttt{HLLL}, \texttt{HLLL*}, and
\texttt{HLLLB}. For \texttt{Zeta}, following the documentation of the
library, we report $8m$. For Apache sketches, we use the bit size of the
output of the serialization function to upper bound the size of the sketch.

The experiments were run on a computer running two Intel Xeon
E5-2690v4 CPUs at 2.60~GHz with a total of 28 cores, and a total of
512~GiB of RAM, running Ubuntu 18.04.5~LTS. The experiments were run
using CPython~3.8.10 and Docker~20.10.7. The Docker environment ran
Ubuntu 20.04.3~LTS, and the C++ code was compiled with GCC
10.3.0. ZetaSketch was compiled with Gradle 7.2 and
OpenJDK~11.0.13. We ran up to 14 experiments in parallel.

\subsection{Results}
\begin{figure}[t]
  \begin{center}
    \includegraphics[width=\linewidth]{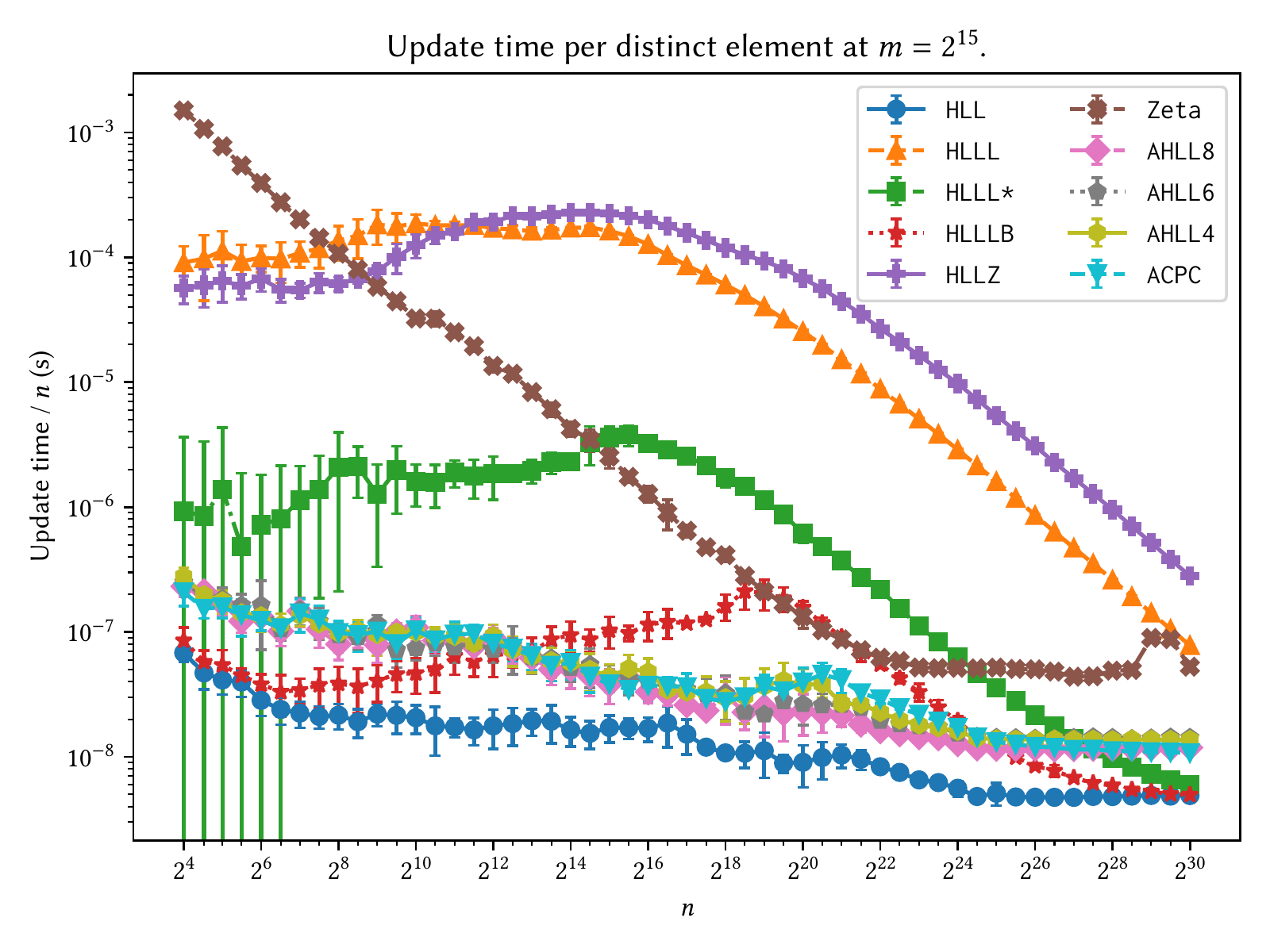}
    \caption{Scaling of update times of different algorithms at
      $m=2^{15}$.
      The times are given in seconds / distinct
      element. Random unsigned 64-bit integers were used as input.}
    \label{fig:scalingatm32768}
  \end{center}
\end{figure}

The scaling of update times is reported in
Figure~\ref{fig:scalingatm32768} at fixed $m=2^{15}$ registers. We
report the average time per distinct element for constructing the sketch. This
shows the general behavior of the algorithm: \texttt{HLL}
is clearly the fastest, but \texttt{HLLL*} catches up as the
number of distinct elements grows. For another view,
Figure~\ref{fig:totaltime1073741824} records the total time for
constructing a sketch with $n=2^{30}$ distinct elements, as a
function of $m$. \texttt{HLLL} and \texttt{HLLL*} remain
competitive with vanilla HyperLogLog until $m=2^{12}$ and $m=2^{15}$,
respectively, and then the total time starts to grow.

\begin{figure}[t]
  \begin{center}
    \includegraphics[width=\linewidth]{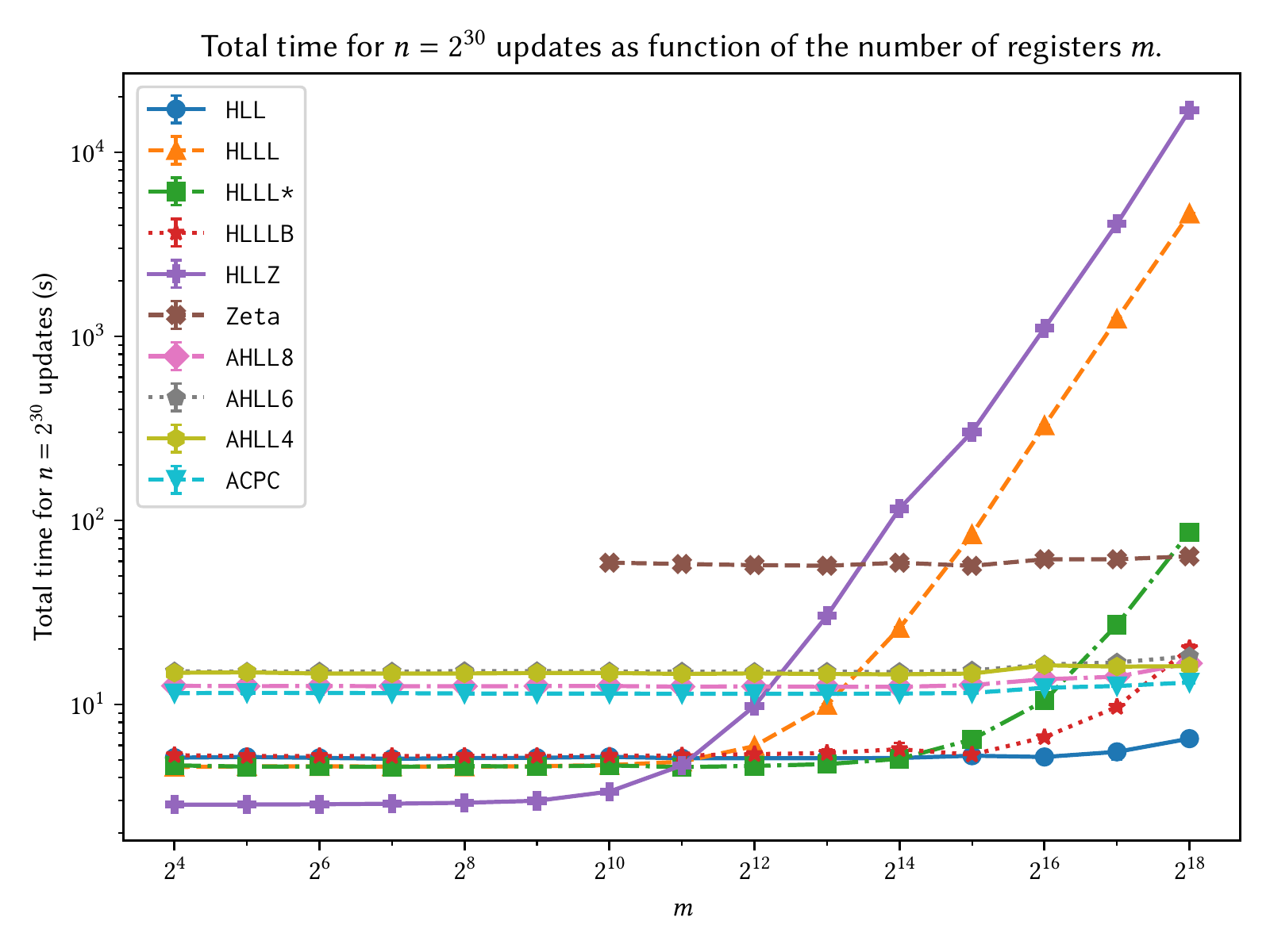}
    \caption{Total times for $n=2^{30}$ updates 
      as a function of $m$ in seconds. Random
      unsigned 64-bit integers are used as input.}
    \label{fig:totaltime1073741824}
  \end{center}
\end{figure}

Figure~\ref{fig:loadfactor} shows yet another view into this behavior
in terms of \emph{load factor} $n/m$. The figure shows the update time
per distinct element with \texttt{HLLL} as a function of the load
factor. Input data consists of $(j,r)$ pairs, so no hashing is
performed. Furthermore, the datapoints indicate the number of
registers used. We see that when the number of distinct elements is
sufficiently large with respect to the number of registers,
the algorithm achieves amortized constant behavior. Indeed, this is to be expected, as once the registers are filled
by distinct values, very few updates take place, as per
Lemma~\ref{lem:amortizedconstant}. For \texttt{HLLL},
in this particular case,
this regime is reached when the load factor is approximately
$2^{17}$; this is only achieved with $m\leq 2^{13}$ due to fixed
$n$. \texttt{HLLL*} reaches the same effect at a lower load factor,
but the results are more noisy, owing to the heuristics, but we have
omitted a separate figure for lack of space.

Although not shown here for lack of space, in particular when the
stream consists of strings, hashing can make a large portion of the
time. In fact, the vanilla HyperLogLog is so efficient in its updates,
that almost 100\% of time is spent on hashing when performing the
updates with strings as input. This suggests that slower updates are
not necessarily a problem, since there are other bottlenecks that may
be unavoidable. In particular, I/O operations can be orders of
magnitude slower, which we have deliberately avoided in our
experiments by performing everything in RAM.

\begin{figure}[t]
  \begin{center}
    \includegraphics[width=\linewidth]{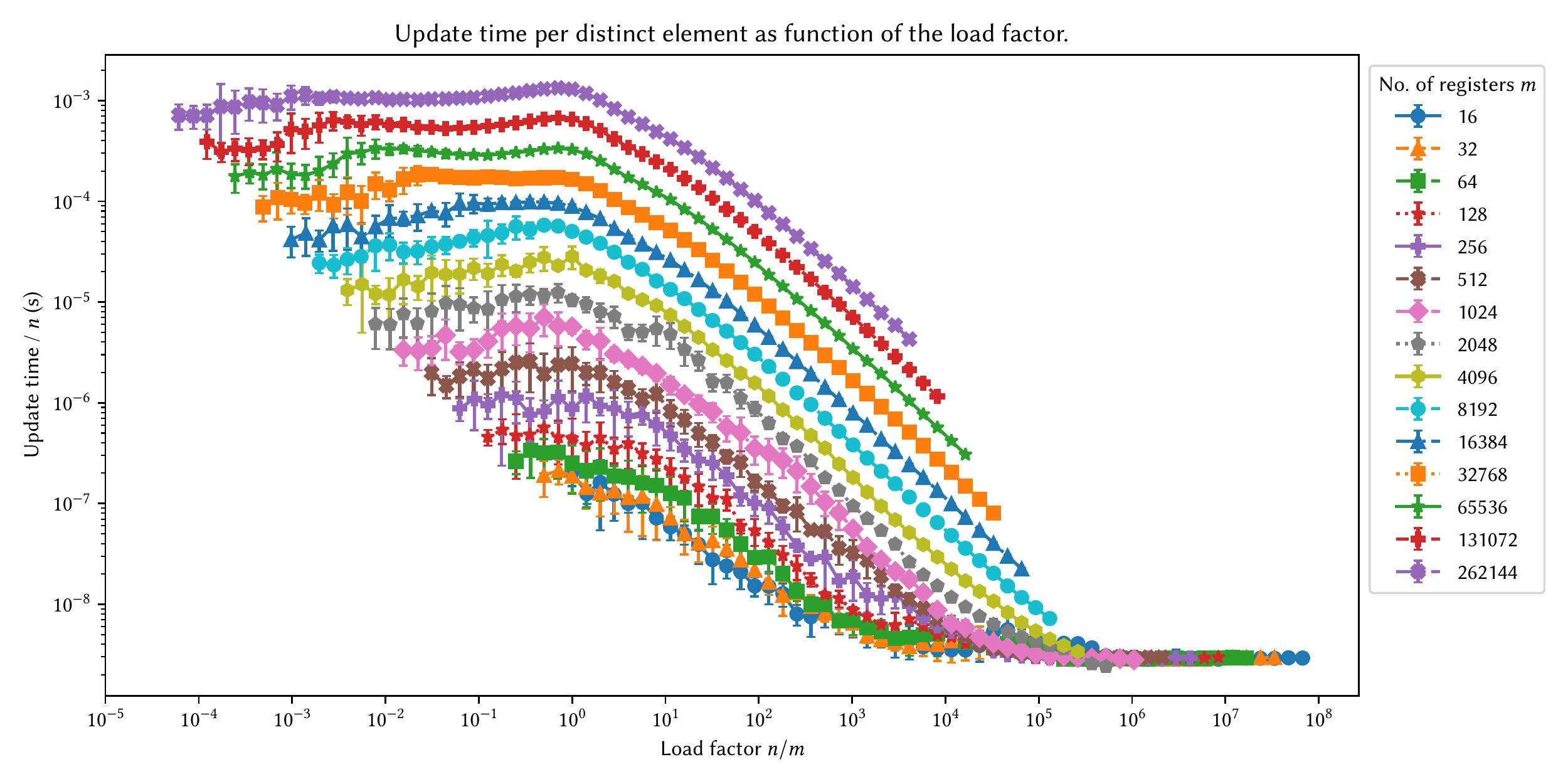}
    \caption{Update time of \texttt{HLLL} per distinct element as a
      function of the load factor $n/m$ with different $m$
      highlighted.
      The times are given in seconds per
      element. Random $(j,r)$ integer pairs are used as input, so no
      hashing is performed.}
    \label{fig:loadfactor}
  \end{center}
\end{figure}

Figure~\ref{fig:merge} shows the time required for merging of two
sketches that have been constructed using $n=2^{30}$ distinct
elements, split equally in half among the sketches, as a function of
the number of registers $m$. We see that,
for \texttt{HLLL},
merging becomes more
costly the larger the sketch size is, largely due to the application
of the full compression routine after the merge. The same routine is
applied also for \texttt{HLLL*}, so there is no discernible difference
in runtime from the \texttt{HLLL}. Despite being slower, we are still
talking about less than one third of a second for merging two
$m=2^{18}$ sketches. Perhaps surprisingly, \texttt{HLLLB} performs
poorly. This can be explained by suboptimal choice of the base value
that yields longer rebasing times, since the sparse representation is
overused.

\begin{figure}[t]
  \begin{center}
    \includegraphics[width=\linewidth]{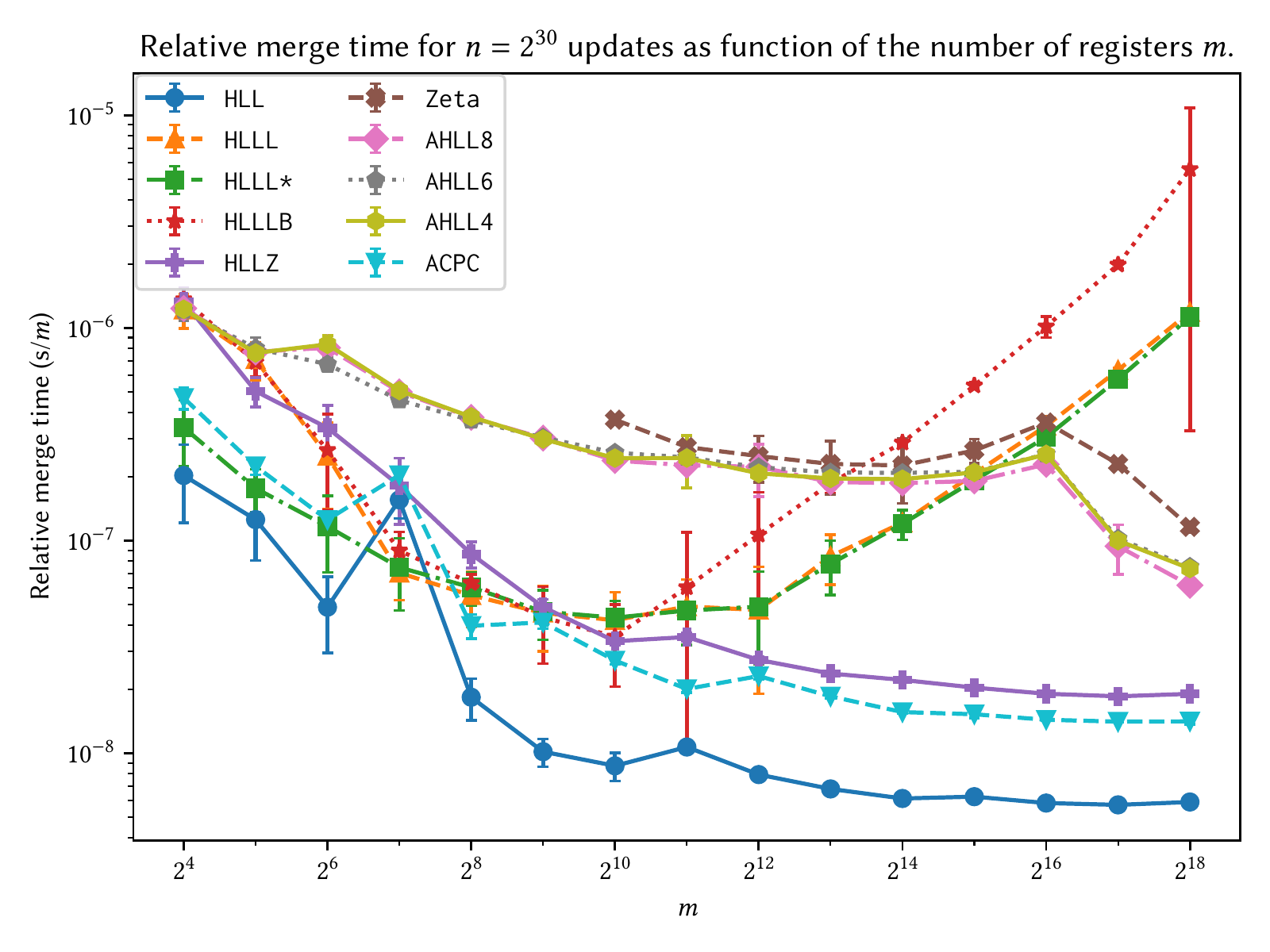}
    \caption{Relative times of merging two sketches
      constructed with $n=2^{30}$ updates.
      The times are given in seconds/$m$.
      Unsigned 64-bit integers were used as input.}
    \label{fig:merge}
  \end{center}
\end{figure}

Figure~\ref{fig:relativesketchsize} shows the relative sketch size as
bits / $m$ after $n=2^{30}$ distinct element updates. In particular,
this illustrates the efficiency of the heuristics we use for
\texttt{HLLL*} since there is no discernible difference in the sketch
size between \texttt{HLLL} and \texttt{HLLL*} when $m$ is not
minuscule. This figure also shows that \texttt{HLLLB} produces clearly
larger sketches than \texttt{HLLL*}. Although the sketch sizes
given for the Apache DataSketches algorithms are upper bounds, the
bounds appear to be quite tight as $m$ grows. In concrete numbers, for
$m=2^{18}$, \texttt{HLLL} reduces the bit size of the sketch by
37.3--37.7\%. On average over all choices of $m$, at $n=2^{30}$, we see a reduction of over 41\%.

\begin{figure}[t]
  \begin{center}
    \includegraphics[width=\linewidth]{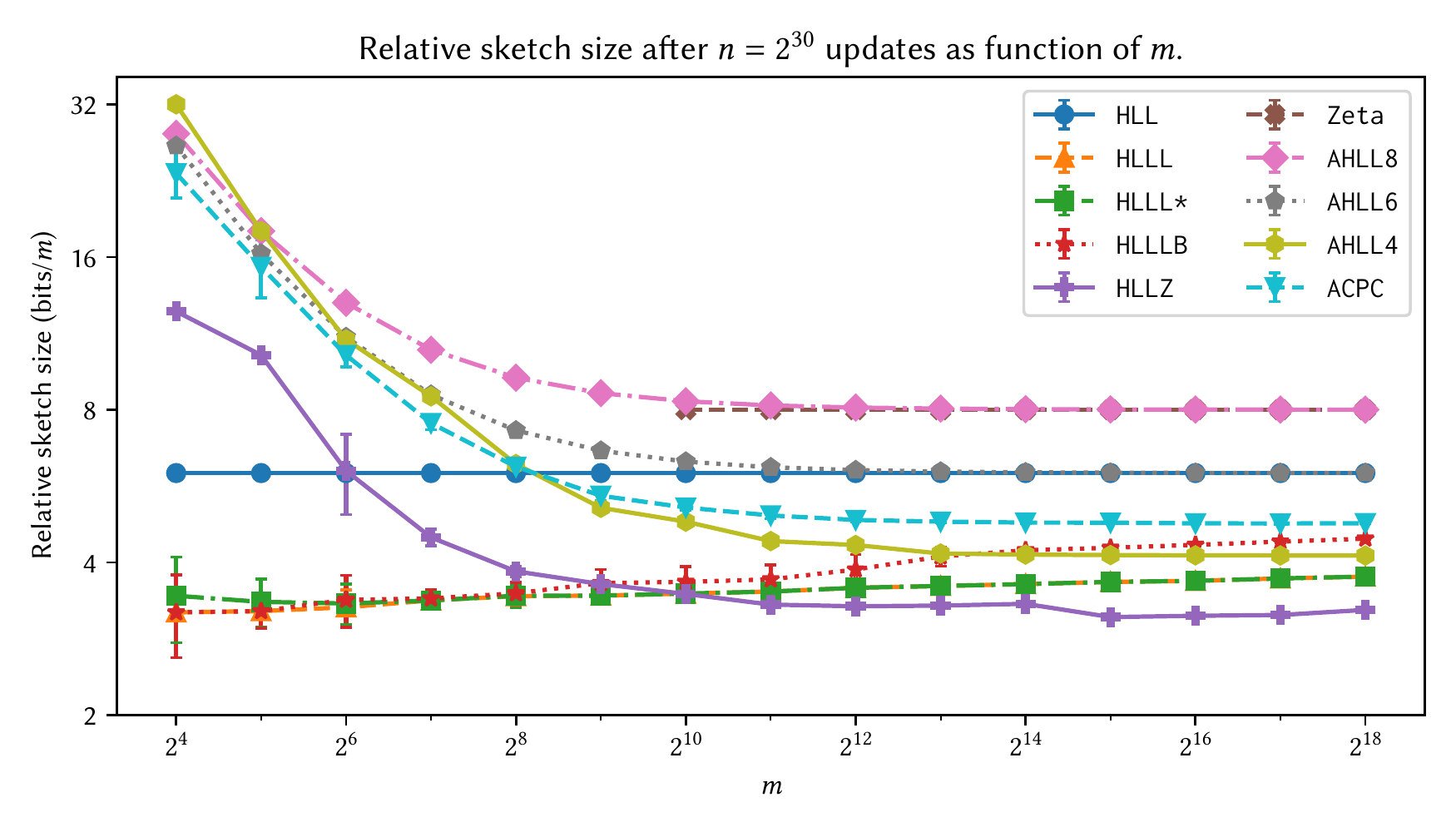}
    \caption{Relative sketch sizes (bit size / $m$) after 
      $n=2^{30}$ 64-bit integer updates as a function of $m$ in seconds.}
    \label{fig:relativesketchsize}
  \end{center}
\end{figure}

Finally, to quantify the tradeoffs among the different implementations, 
Figure~\ref{fig:pareto} plots the sketch size vs. update time
at $n=2^{30}$ and $m=2^{15}$ with unsigned
64-bit integers as input. Algorithms to the left and towards the
bottom would be preferred; this shows that \texttt{HLLL*} performs
quite well in this parameter regime. In fact, this is a sweet spot for
\texttt{HLLL*} where the sketch size is very minuscule while updates are
still within the amortized constant behavior zone. Entropy compression
is required for achieving smaller sketch size, at the expense of
considerably higher update times.

\begin{figure}[t]
  \begin{center}
    \includegraphics[width=\linewidth]{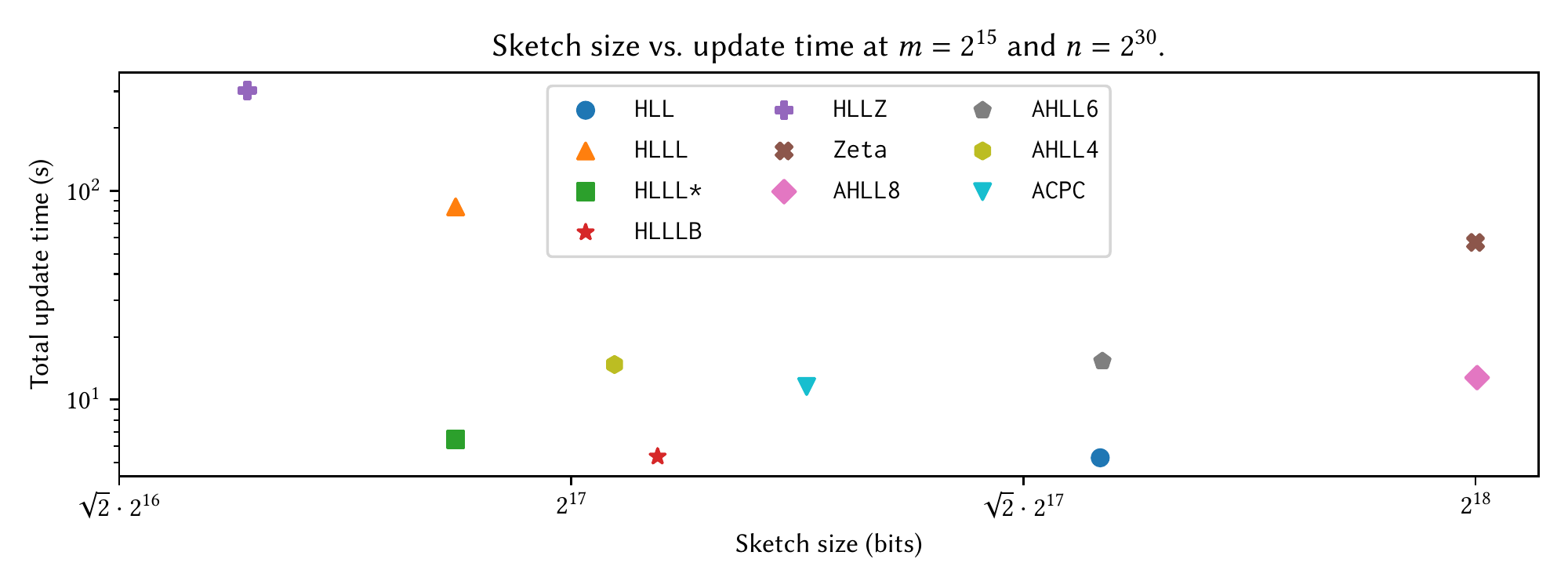}
    \caption{Update time / sketch size tradeoff plotted at $n=2^{30}$
      and $m=2^{15}$. Unsigned 64-bit integers were used as input.}
    \label{fig:pareto}
  \end{center}
\end{figure}

Figure~\ref{fig:pareto2} provides a multi-datapoint view into the
tradeoff by fixing $n=2^{30}$ and plotting the relative sketch size
vs. total update size at $m=2^{10},\ldots,2^{18}$. In general, if a
datapoint (for fixed $m$, although
this cannot be seen in the figure) lies to the left and to the bottom
of another datapoint, it should always be preferred. This shows that
there is indeed a regime where \texttt{HLLL*} is preferred when $m$ is
sufficiently small in relation to $n$, except when sketch minimality
is desired, which would favor entropy compression. \texttt{HLLL*} is at a
disadvantage when constant time updates are required at
very high accuracy, but at a relatively low $n$.

\begin{figure}[t]
  \begin{center}
    \includegraphics[width=\linewidth]{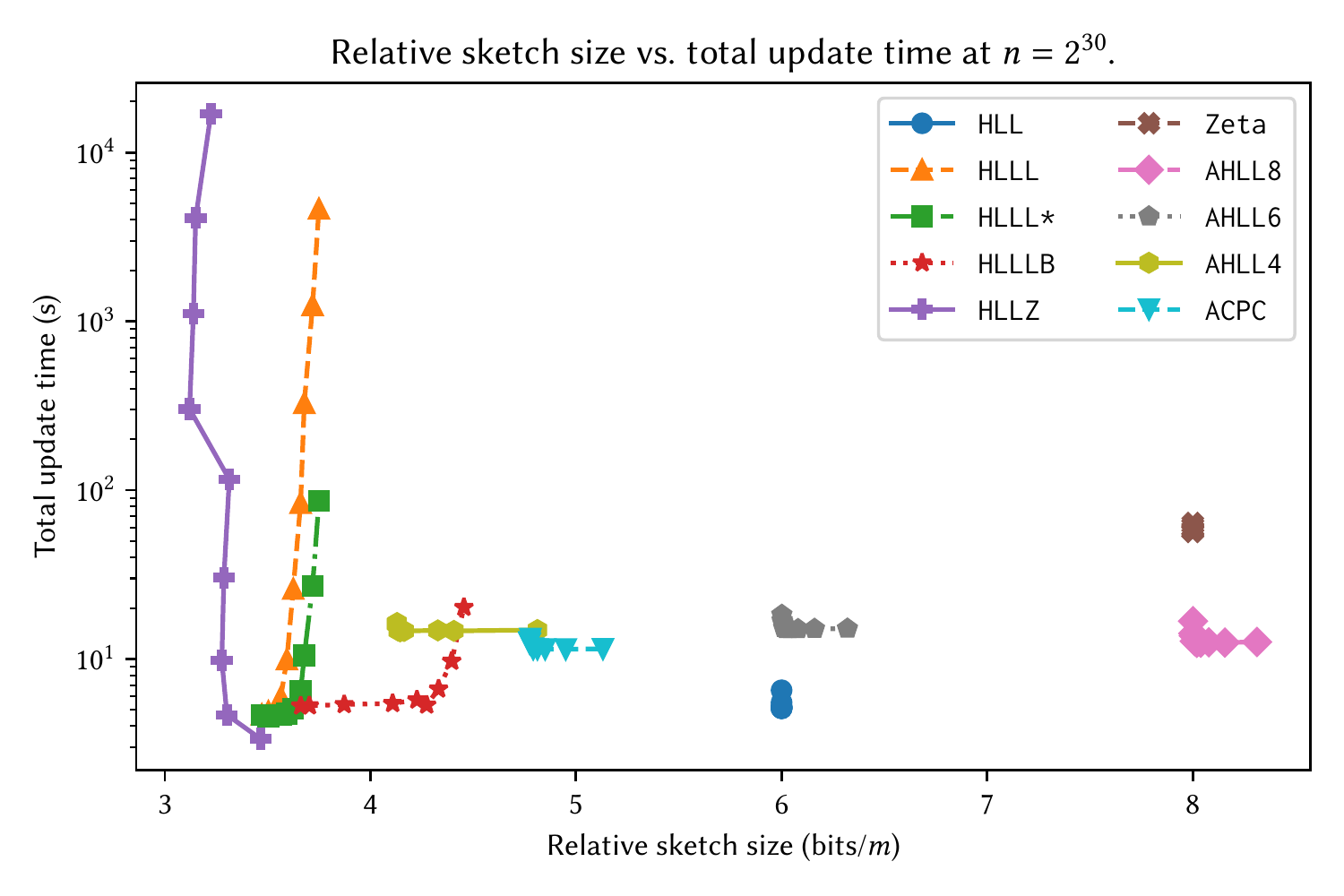}
    \caption{Update time / relative sketch size tradeoff at $n=2^{30}$ and
      $m=2^{10},\ldots,2^{18}$ with unsigned 64-bit integers as input.}
    \label{fig:pareto2}
  \end{center}
\end{figure}

\begin{acks}
  We thank Martin Aumüller for constructive discussions.
  This work was supported by VILLUM
  Foundation grant 16582.
\end{acks}

\bibliographystyle{ACM-Reference-Format}
\bibliography{refs.bib}

\appendix

\end{document}